\providecommand{\tabularnewline}{\\}
\providecommand{\algorithmname}{Algorithm}
\DeclareRobustCommand{\lyxdeleted}[3]{{\color{lyxdeleted}\lyxsout{#3}}}
\DeclareRobustCommand{\lyxsout}[1]{\ifx\\#1\else\sout{#1}\fi}
\theoremstyle{plain}
\newtheorem{lem}{\protect\lemmaname}
\theoremstyle{plain}
\newtheorem{prop}{\protect\propositionname}
\author{


\IEEEauthorblockN{Bowen~Li~and Junting~Chen}

\IEEEauthorblockA{School of Science and Engineering (SSE) and Future Network of Intelligence Institute (FNii) \\ The Chinese University of Hong Kong, Shenzhen, Guangdong 518172, China}

}
\newcommand{\newac}{\newacronym}
\newcommand{\ac}{\gls}
\newcommand{\acpl}{\glspl}
\renewcommand{\lyxdeleted}[3]{{\color{lyxdeleted}{}}}
\providecommand{\lemmaname}{Lemma}
\providecommand{\propositionname}{Proposition}
\begin{document}
\title{Radio Map Assisted Approach for Interference-Aware Predictive UAV
Communications}
\maketitle
\begin{abstract}
Herein, an interference-aware predictive aerial-and-terrestrial communication
problem is studied, where an \ac{uav} delivers some data payload
to a few nodes within a communication deadline. The first challenge
is the possible interference to the ground \acpl{bs} and users possibly
at unknown locations. This paper develops a radio-map-based approach
to predict the channel to the receivers and the unintended nodes.
Therefore, a predictive communication strategy can be optimized ahead
of time to reduce the interference power and duration for the ground
nodes. Such predictive optimization raises the second challenge of
developing a low-complexity solution for a batch of transmission strategies
over $T$ time slots for $N$ receivers before the flight. Mathematically,
while the proposed interference-aware predictive communication problem
is non-convex, it is converted into a relaxed convex problem, and
solved by a novel dual-based algorithm, which is shown to achieve
global optimality at asymptotically small slot duration. The proposed
algorithm demonstrates orders of magnitude saving of the computational
time for moderate $T$ and $N$ compared to several existing solvers.
Simulations show that the radio-map-assisted scheme can prevent all
unintended receivers with known positions from experiencing interference
and significantly reduce the interference to the users at unknown
locations.
\end{abstract}

\begin{IEEEkeywords}
UAV communication, air-to-ground interference, radio map, predictive
communication.
\end{IEEEkeywords}

\section{Introduction\label{sec:intro}}

Low-altitude \ac{uav} activities have grown significantly over the
last decade \cite{WuXuZenNg:J21}. It is important to establish a
reliable communication network for real-time navigation, control,
and surveillance of the \ac{uav} network. For example, it may be
crucial to acquire the field of view of the \ac{uav} to monitor the
operation status during the mission of the \ac{uav}, and this requires
a reliable communication network.

One potential solution for \ac{uav} network communication is to construct
\acpl{son} that utilize the orthogonal spectrum of those used in
terrestrial cellular networks for interference avoidance \cite{YaoWanXuXu:J20,ShuSaaDaoNa:J20,MouGaoLiuWu:J22}.
However, \acpl{son} may not be reliable as the topology of the \ac{uav}
network can be significantly time-varying. Another viable solution
is to extend the coverage of terrestrial cellular networks to assist
with \ac{uav} communication. However, the transmission from the \ac{uav}
to ground \acpl{bs} may generate strong interference with other ground
terminals due to the high probability of \ac{los} conditions from
the sky \cite{ZhaZhaDiSon:J19,ChaSadBet:J19,ZhuGuoLiChe:J19,AzaGerGarPol:J20,HasKadAkh:J22,RahHosHeDai:J22,TanZhaHeZhu:J22,ZenLyuZha:J19}.

A majority of existing research that studied integrated \ac{uav}-and-terrestrial
communications ignored the interference \cite{ZhaZenZha:J18,JiYanSheXu:J20,LiChaCai:J20,MaZhoQiaChe:J21,LiChe:C22}
or assumed orthogonal transmissions between aerial nodes and ground
nodes \cite{HuCaiYuQin:J19,HuCaiLiuYu:J20,AlsYuk:J21}. Some recent
works \cite{ZhaZhaDiSon:J19,ChaSadBet:J19,ZhuGuoLiChe:J19,AzaGerGarPol:J20,HasKadAkh:J22,RahHosHeDai:J22,TanZhaHeZhu:J22,ZenLyuZha:J19}
attempted to mitigate the interference from the \acpl{uav} to the
ground nodes by optimizing the \ac{uav} trajectory, power control,
sub-channel allocation, and MIMO beamforming. However, these approaches
may not apply to some \ac{uav} networks where communication is not
the primary mission of the \acpl{uav}. First, the \ac{uav} trajectory
may not be altered for communication purposes during the mission,
and therefore, the approaches based on trajectory optimization \cite{ZhaZhaDiSon:J19,ChaSadBet:J19,ZhuGuoLiChe:J19}
are not suitable here. Secondly, as the topology of the \ac{uav}
network is time-varying, it becomes challenging to meet transmission
deadlines or age-of-information requirements for delivering large
and time-sensitive content, and these factors are not considered in
\cite{AzaGerGarPol:J20,HasKadAkh:J22,RahHosHeDai:J22,TanZhaHeZhu:J22}.

This paper studies an integrated aerial-and-terrestrial communication
scenario, where a \ac{uav} node uses the cellular spectrum to transmit
to $N$ nodes, which can be other \ac{uav} nodes or ground \acpl{bs}
with known locations. The objective is to deliver a given amount of
data within a communication deadline, resulting in a planning problem
for the communication timing and resource allocation. The main challenge
is the air-to-ground interference to the unintended \acpl{bs} and
users on the ground due to the transmission of the \ac{uav}. Specifically,
a general interference mitigation may require the \ac{csi} of the
unintended nodes for the entire transmission, which can be challenging
for this resource planning problem where the future \ac{csi} cannot
be obtained via online measurements. In addition, while the locations
of the \acpl{bs} can be known by the network for a rough \ac{csi}
prediction, the locations of the ground users are usually unknown.
As for a universal interference management strategy for such unknown
receivers, it is desired to transmit with minimum energy and duration,
which makes the resource planning and optimization problem more difficult
to solve.

To tackle the interference to the nodes with known locations, we employ
a {\em radio map} approach, where the radio map captures the large-scale
channel information between any two locations, including path loss,
shadowing, and statistics of the small-scale fading; but the actual
channel is not available on the radio map. The opportunity that drives
the radio map approach is the fact that aerial nodes mostly have predetermined
trajectories. For example, the trajectories of cargo delivery \acpl{uav}
and many patrol surveillance \acpl{uav} are determined by the operators
prior to the flight. As a consequence, the future \ac{csi} of the
nodes with known locations can be predicted by the \ac{uav} trajectory
using the radio map, and therefore, one can optimize for the {\em predictive UAV communications}.
To tackle the interference to the unknown ground nodes, we introduce
a sleep strategy aiming at reducing the transmission duration and
lowering the chance of interfering with the ground nodes with unknown
channel status. To summarize, this paper attempts to address the following
challenges:
\begin{itemize}
\item {\em How to exploit radio maps to optimize for interference-aware communication.}
First, as radio maps may only capture large-scale \ac{csi}, the uncertainty
of the future channel quality in predictive communication needs to
be addressed. In addition, the impact of the air-to-ground interference
for ground nodes at unknown locations needs to be considered.
\item {\em How to develop low-complexity solutions over a large horizon for the predictive communication.}
The predictive \ac{uav} communication problem involves determining
a sequence of transmission strategies over $T$ time slots {\em before}
the flight. Therefore, optimizing the solution based on standard off-the-shelf
solvers can be computationally prohibitive for a large $T$ even for
convex problems.
\end{itemize}
\ \ \ Mathematically, we formulate a radio-map-assisted predictive
communication problem for $N$-receiver interference-aware \ac{uav}
communications with a delay constraint. While previous research has
considered the impact of small-scale fading on transmission strategy
design, for example, prediction of the future small-scale fading \cite{YouZha:J19},
online adjustment based on real-time small-scale fading \cite{SamShaAssNgu:J20},
and constant attenuation on the channel \cite{LiZhaZhaYan:J21}, these
approaches cannot be applied to the transmission planning in predictive
communications. To address the challenge of the discontinuity due
to the on-off control for the interference-aware sleep mode optimization,
we propose a relax-then-round algorithm for an efficient solution
with asymptotic optimality guarantee. Moreover, we develop a dual-based
algorithm, substantially reducing the computational complexity without
sacrificing the optimality. In summary, we make the following contributions:

\begin{itemize}
\item We formulate an interference-aware predictive communication problem
exploiting radio maps. In this formulation, we address the uncertainty
of the future \ac{csi} by developing a deterministic expected capacity
lower bound, and mitigate the air-to-ground interference for ground
nodes at unknown locations by penalizing the transmission duration.
\item While the problem is non-convex, we develop a relax-then-round optimization
strategy, which is proven to have the asymptotic optimality guarantee.
Based on the Lagrangian dual technique, we develop a low-complexity
algorithm, which achieves orders of magnitude of complexity savings,
compared to state-of-the-art \ac{lr}-based and \ac{sca}-based algorithms
for a large horizon $T$.
\item Our simulations show that the proposed radio-map-assisted scheme can
prevent all unintended receivers with known positions from experiencing
interference and also significantly reduce the interference to the
users at unknown locations. Moreover, the proposed relax-then-round
can achieve the global optimality at asymptotically small slot duration,
and the running time is 1000 times less than the \ac{lr} \& \ac{sca}
scheme.
\end{itemize}
\ \ \ The rest of the paper is organized as follows. Section \ref{sec:System-Model}
presents the communication system model and the problem formulation.
Section \ref{sec:Problem-Reformulation} reformulates the problem
to handle the uncertainty from the small-scale fading then relaxes
the integer programming problem to a convex problem and propose a
dual-based algorithm. We develop a rounding strategy to modify the
relaxed results for suitable the original problem in Section \ref{sec:Rounding-Policy}.
Numerical results are demonstrated in Section \ref{sec:Simulation},
and conclusions are given in Section \ref{sec:Conclusion}.

\begin{figure}
\begin{centering}
\includegraphics[width=0.4\textwidth]{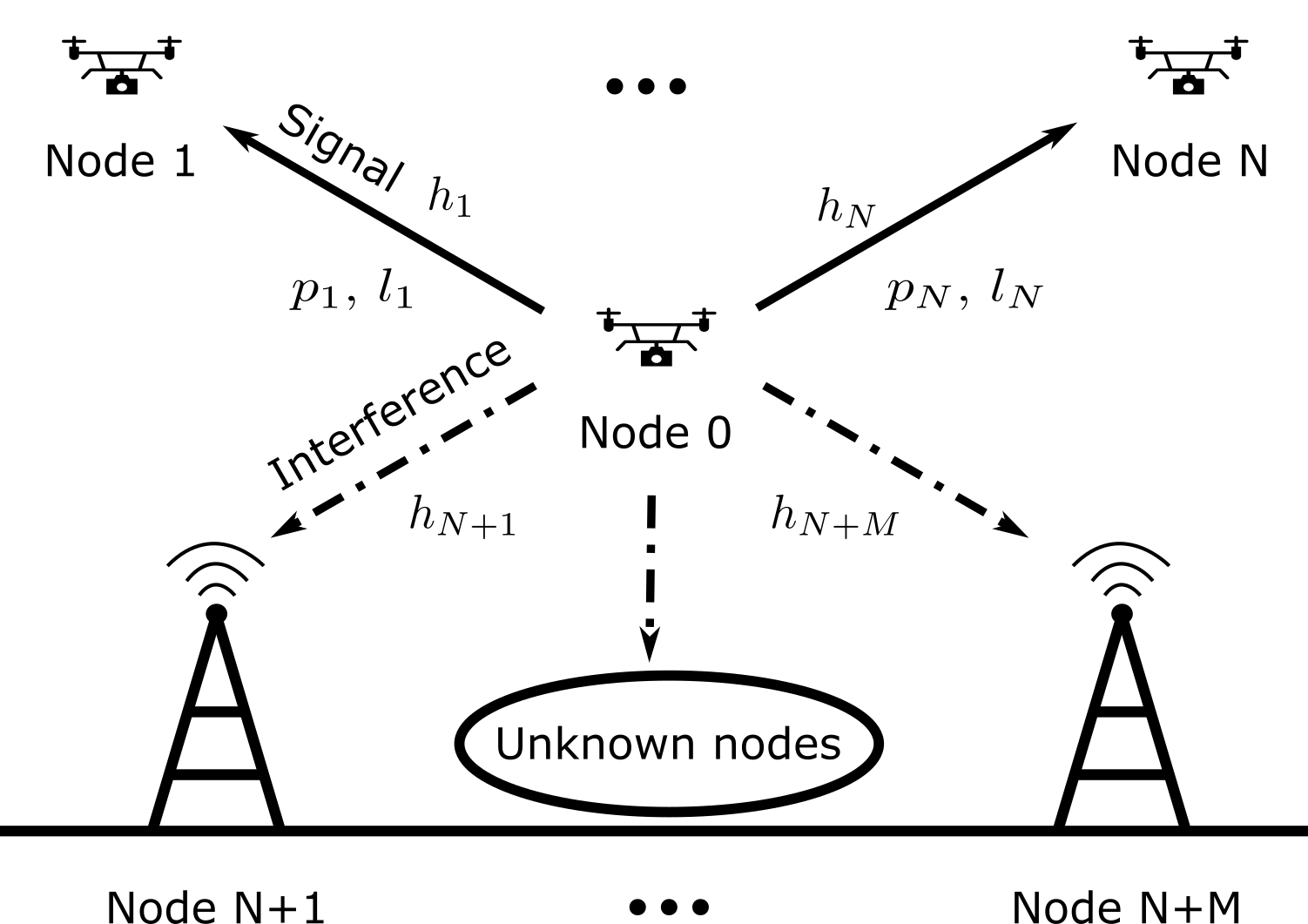}
\par\end{centering}
\caption{Slotted communication system illustration. \label{fig:system_model}}
\end{figure}

\section{System Model\label{sec:System-Model}}

Consider a multiuser transmission system with a single transmitter,
$N$ receiver nodes, and $M$ neighboring nodes that may be interfered
by the transmitter, as shown in Fig. \ref{fig:system_model}. The
positions of the transmitter, receivers, and the neighboring nodes
are known and denoted as $\bm{q}_{j}\in\mathbb{R}^{3}$, where $j=0,1,\dots,N+M$.
Here, $j=0$ refers to the transmitter, while $1\leq j\leq N$ refers
to the receiver nodes, and $N+1\leq j\leq N+M$ refers to the neighboring
nodes.

\subsection{Channel and radio map models}

We consider a flat fading channel model, in which the channel power
gain from the transmitter to node $1\le j\le N+M$ is given by 
\begin{equation}
h_{j}=g_{j}\xi_{j}\label{eq:channel_model}
\end{equation}
where $g_{j}$ and $\xi_{j}$ represent the channel gain due to the
large- and small-scale fading, respectively. The small-scale fading
$\xi_{j}$ is random and follows a Gamma distribution $\text{G}\left(\kappa_{j},1/\kappa_{j}\right)$
with shape parameter $\kappa_{j}$, and the scale parameter is chosen
as $1/\kappa_{j}$ to normalize the mean of $\xi_{j}$ to 1. Note
that Gamma distribution can be used to model various fading models,
including Rayleigh and Nakagami. For example, the Gamma distribution
with $\kappa_{j}=1$ degenerates to an exponential distribution, which
corresponds to the power gain of the Rayleigh fading channel.

Assume that the system has access to the large-scale channel parameters,
including the large-scale channel gain $g_{j}$ and the Gamma distribution
parameter $\kappa_{j}$, via a radio map $\bm{\Theta}(\bm{q}_{0},\bm{q}_{j})=(g_{j},\kappa_{j})$
that records the large-scale channel parameters $g_{j}$ and $\kappa_{j}$
as a function of the transmitter and receiver locations $\bm{q}_{0}$
and $\bm{q}_{j}$ \cite{LevYapKutCai:J21,liuche:J23}. As a consequence,
the statistics of the channel $h_{j}$ is available ahead of time
when the trajectories of the nodes $\bm{q}_{j}$ are available.

\subsection{Communication model}

Consider slotted transmission, where in each time slot $t\in\mathcal{T}\triangleq\{1,2,\cdots,T\}$,
a fraction of non-overlapping frequency resource $l_{n}(t)$ is allocated
to the $n$th receiver, $n\in\mathcal{N}\triangleq\{1,2,\dots,N\}$.
Denote the frequency allocation on time $t$ as $\boldsymbol{l}(t)\in\mathcal{L}$,
where
\[
\mathcal{L}\triangleq\left\{ \left[l_{n}\left(t\right)\right]_{n\in\mathcal{N}}:l_{n}\left(t\right)\in\left[0,1\right],\sum_{n\in\mathcal{N}}l_{n}\left(t\right)\in\left[0,1\right]\right\} 
\]
denotes the set of feasible frequency resources, in which the total
resources are normalized to 1. As a result, the total transmission
duration among $T$ time slots can be computed as $\sum_{t\in\mathcal{T}}\mathbb{I}\{\sum_{n\in\mathcal{N}}l_{n}(t)>0\}$.

Let $p_{n}(t)$ be the power allocated to receiver $n$ at time slot
$t$, where $p_{n}(t)\in\mathcal{P}\triangleq\{p:p\geq0\}$. As orthogonal
frequency resources are used for the $N$ receivers, the total normalized
throughput for the $n$th receiver over $T$ time slots can be computed
as
\begin{equation}
\Upsilon_{n}\left(\bm{p}_{n},\bm{l}_{n}\right)=\sum_{t\in\mathcal{T}}\log\left(1+p_{n}\left(t\right)h_{n}\left(t\right)\right)l_{n}\left(t\right)\label{eq:thp_n_def}
\end{equation}
where $\bm{p}_{n}=[p_{n}(t)]_{t\in\mathcal{T}}\in\mathbb{R}^{T\times1}$
and $\bm{l}_{n}=[l_{n}(t)]_{t\in\mathcal{T}}\in\mathbb{R}^{T\times1}$.

Meanwhile, the maximal interference over all frequency resources for
the $m$th neighboring node at time slot $t$ is modeled as
\[
I_{m}\left(t\right)=h_{m}\left(t\right)\cdot\max\left\{ p_{n}\left(t\right):n\in\mathcal{N}\right\} .
\]

Note that there could be other ground nodes at unknown locations and
their channels are not available at the transmitter node. To reduce
the impact on those unknown ground nodes, we propose to minimize both
the transmission power and the total transmission time, resulting
in the following cost function
\begin{equation}
F\left(\bm{P},\bm{L}\right)=\sum_{t\in\mathcal{T}}\sum_{n\in\mathcal{N}}p_{n}\left(t\right)l_{n}\left(t\right)+\lambda\sum_{t\in\mathcal{T}}\mathbb{I}\left\{ \sum_{n\in\mathcal{N}}l_{n}\left(t\right)>0\right\} \label{eq:erg_def}
\end{equation}
where $\boldsymbol{P}=[\bm{p}_{n}]_{n\in\mathcal{N}}\in\mathbb{R}^{T\times N}$,
$\boldsymbol{L}=[\bm{l}_{n}]_{n\in\mathcal{N}}\in\mathbb{R}^{T\times N}$,
and $\mathbb{I}\{\mathcal{A}\}$ is an indicator function defined
as $\mathbb{I}\{\mathcal{A}\}=1$ if condition $\mathcal{A}$ is satisfied,
and $\mathbb{I}\{\mathcal{A}\}=0$ otherwise. The first term in (\ref{eq:erg_def})
captures the transmission power, the second term captures the total
transmission time, and $\lambda$ is the weighting factor for the
two terms.

This paper aims to plan the transmission strategy $\bm{P}$ and $\boldsymbol{L}$
for $T$ time slots ahead for the transmission that minimizes the
overall interference and controls interference to the neighboring
nodes less than $I_{\text{bs}}$ while delivering $S_{n}$ bits of
data to $N$ receivers
\begin{align}
\underset{\bm{P},\bm{L}}{\text{minimize}} & \quad F\left(\bm{P},\bm{L}\right)\nonumber \\
\text{subject to} & \quad\mathbb{E}\left\{ \Upsilon_{n}\left(\bm{p}_{n},\bm{l}_{n}\right)\right\} \ge S_{n},\forall n\label{eq:throughput_constraint_v0}\\
 & \quad\mathbb{E}\left\{ I_{m}\left(t\right)\right\} \le I_{\text{bs}},\forall m,t\label{eq:ubiquitous_interference_constraint_v0}\\
 & \quad p_{n}\left(t\right)\in\mathcal{P},\forall n,t,\,\boldsymbol{l}\left(t\right)\in\mathcal{L},\forall t\label{eq:resources_constraint_v0}
\end{align}
where (\ref{eq:throughput_constraint_v0}) and (\ref{eq:ubiquitous_interference_constraint_v0})
are the expected throughput constraints and the expected interference
constraints, respectively.

\section{Radio-Map-Assisted Optimization\label{sec:Problem-Reformulation}}

The radio-map-assisted optimization faces a significant challenge
in dealing with channel uncertainty, as only large-scale information
is available. This section addresses this challenge by analyzing the
expected throughput and interference using large-scale information
from the radio map. A robust formulation is thus constructed, then
transformed and relaxed into a convex problem that enables efficient
and effective solution techniques to be applied. Furthermore, we propose
an efficient dual-based algorithm to accelerate the search for optimal
solutions.

\subsection{Radio-map-assisted reformulation}

With the aid of the radio map $\bm{\Theta}$ and the known locations
of the transmitter, receivers, and neighboring nodes at time $t$,
$\bm{q}_{j}(t)$, the large-scale channel parameters at time $t$
are $(g_{j}(t),\kappa_{j}(t))=\bm{\Theta}(\bm{q}_{0}(t),\bm{q}_{j}(t))$.
Thus, the power gain at time $t$, $h_{j}(t)$ defined in (\ref{eq:channel_model}),
follows Gamma distribution $\text{G}\left(\kappa_{j}(t),g_{j}(t)/\kappa_{j}(t)\right)$.
Based on the statistical information of $h_{n}(t)$, $n\in\mathcal{N}$,
the expected channel capacity can be explicitly lower bounded as follows.
\begin{lem}
\label{lem:Lower_bound_of_thp_exp}(A deterministic capacity lower
bound) The expected capacity $\mathbb{E}\{\log(1+p_{n}h_{n})\}$ is
lower bounded by $\text{\ensuremath{\log\left(1+p_{n}g_{n}\right)}}-\epsilon_{n}$
where $\epsilon_{n}=\log(e)/\kappa_{n}-\log(1+(2\kappa_{n})^{-1}).$
\end{lem}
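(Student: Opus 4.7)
The plan is to decompose the channel as $h_n = g_n \xi_n$ with $\xi_n \sim \text{G}(\kappa_n, 1/\kappa_n)$ of unit mean, so that $\log(1+p_n g_n)$ is exactly the Jensen upper bound on $\mathbb{E}\{\log(1+p_n h_n)\}$; the task then reduces to bounding the Jensen gap uniformly in $p_n$ and $g_n$ by $\epsilon_n$.

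The first step is a pointwise inequality obtained from the weighted AM--GM inequality applied with weights $1/(1+p_n g_n)$ and $p_n g_n/(1+p_n g_n)$ to the pair $(1,\xi_n)$, namely
\[
1 + p_n g_n \xi_n \;\geq\; (1+p_n g_n)\,\xi_n^{p_n g_n/(1+p_n g_n)}.
\]
Taking $\log$ and expectation yields
\[
\mathbb{E}\{\log(1+p_n h_n)\} \;\geq\; \log(1+p_n g_n) + \frac{p_n g_n}{1+p_n g_n}\,\mathbb{E}\{\log \xi_n\}.
\]
Because $\mathbb{E}\{\log \xi_n\}\leq \log \mathbb{E}\{\xi_n\}=0$ by Jensen's inequality on the concave $\log$, and because $p_n g_n/(1+p_n g_n)\in[0,1)$, the coefficient in front of this non-positive term can be enlarged to $1$, which kills all dependence on $p_n$ and $g_n$ in the residual:
\[
\mathbb{E}\{\log(1+p_n h_n)\} \;\geq\; \log(1+p_n g_n) + \mathbb{E}\{\log \xi_n\}.
\]

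What remains is to prove $\mathbb{E}\{\log \xi_n\}\geq -\epsilon_n$. The standard Gamma log-moment gives $\mathbb{E}\{\log \xi_n\}=\log(e)\,[\psi(\kappa_n)-\ln \kappa_n]$, where $\psi$ is the digamma function, so via the recurrence $\psi(\kappa+1)=\psi(\kappa)+1/\kappa$ the target is equivalent to the special-function inequality $\psi(\kappa_n+1)\geq \ln(\kappa_n+1/2)$. I expect this last step to be the main obstacle, because it cannot be read off from an elementary Jensen-type manipulation; the cleanest route I see is a Binet-type integral representation that expresses
\[
\psi(\kappa+1)-\ln(\kappa+1/2) \;=\; \int_0^\infty e^{-(\kappa+1)t}\,\frac{2\sinh(t/2)-t}{t(1-e^{-t})}\,dt,
\]
whose integrand is non-negative because $2\sinh(t/2)\geq t$ for $t\geq 0$. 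Chaining this digamma bound with the weighted AM--GM step proves $\mathbb{E}\{\log(1+p_n h_n)\}\geq \log(1+p_n g_n)-\epsilon_n$, as claimed.
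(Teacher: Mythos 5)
Your proof is correct, but it takes a genuinely different route from the paper. The paper's proof is essentially a citation: it notes that $p_{n}h_{n}\sim\text{G}(\kappa_{n},p_{n}g_{n}/\kappa_{n})$ and invokes the logarithmic Jensen's gap result for Gamma variables (Proposition 3 of the cited reference) to conclude $\log(1+\mathbb{E}\{p_{n}h_{n}\})-\mathbb{E}\{\log(1+p_{n}h_{n})\}\le\epsilon_{n}$ directly. You instead give a self-contained derivation: the weighted AM--GM step reduces everything to the scale-free quantity $\mathbb{E}\{\log\xi_{n}\}$, and the remaining claim $\mathbb{E}\{\log\xi_{n}\}\ge-\epsilon_{n}$ is exactly the digamma inequality $\psi(\kappa_{n}+1)\ge\ln(\kappa_{n}+1/2)$, which you settle via an integral representation whose integrand is nonnegative because $2\sinh(t/2)\ge t$. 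The representation you flagged as the main obstacle does hold: it follows by subtracting Frullani's integral $\ln(\kappa+1/2)=\int_{0}^{\infty}t^{-1}\left(e^{-t}-e^{-(\kappa+1/2)t}\right)dt$ from Gauss's formula $\psi(\kappa+1)=\int_{0}^{\infty}\left(e^{-t}/t-e^{-(\kappa+1)t}/(1-e^{-t})\right)dt$ and using $e^{t/2}(1-e^{-t})=2\sinh(t/2)$, and it recovers the classical sharp bound $\psi(x+1)\ge\ln(x+1/2)$. In fact your argument proves the slightly stronger statement $\mathbb{E}\{\log(1+p_{n}h_{n})\}\ge\log(1+p_{n}g_{n})+\log(e)\left(\psi(\kappa_{n})-\ln\kappa_{n}\right)$, of which the lemma's $\epsilon_{n}$ is a weakening. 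What the paper's route buys is brevity (no special-function work, just an off-the-shelf Jensen-gap bound); what yours buys is self-containedness, a transparent reason the gap depends only on $\kappa_{n}$ and not on $p_{n}g_{n}$, and a tighter intermediate constant.
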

\begin{proof}
See Appendix \ref{sec:Proof_lem_lower_bound_of_thp_exp}.
\end{proof}
The gap $\epsilon_{n}$ is always positive, and tends to 0 when $\kappa_{n}$
goes to infinity where the channel is asymptotically deterministic.

As a result of Lemma \ref{lem:Lower_bound_of_thp_exp}, the expected
throughput $\mathbb{E}\left\{ \Upsilon_{n}\left(\bm{p}_{n},\bm{l}_{n}\right)\right\} $
is lower bounded by
\[
\mathbb{E}\left\{ \Upsilon_{n}\left(\bm{p}_{n},\bm{l}_{n}\right)\right\} \ge\sum_{t\in\mathcal{T}}\left(\text{\ensuremath{\log\left(1+p_{n}\left(t\right)g_{n}\left(t\right)\right)}}-\epsilon_{n}\right)l_{n}\left(t\right).
\]
In other words, the expected throughput constraint (\ref{eq:throughput_constraint_v0})
can be relaxed to 
\begin{equation}
\sum_{t\in\mathcal{T}}c_{n}\left(t\right)l_{n}\left(t\right)\ge S_{n},\forall n\label{eq:rlx_exp_thp_c}
\end{equation}
where $c_{n}(t)=\text{\ensuremath{\log\left(1+p_{n}\left(t\right)g_{n}\left(t\right)\right)}}-\epsilon_{n}(t)$
is the approximated channel capacity.

In addition, using the statistical information of $h_{m}(t)$, $m\in\mathcal{M}\triangleq\{N+1,\cdots,N+M\}$,
the expected interference $\mathbb{E}\{I_{m}\}$ at node $m$ can
be computed as $\mathbb{E}\left\{ I_{m}\right\} =\mathbb{E}\left\{ h_{m}\left(t\right)\cdot\max\left\{ p_{n}\left(t\right):n\in\mathcal{N}\right\} \right\} =g_{m}\cdot\max\left\{ p_{n}\left(t\right):n\in\mathcal{N}\right\} .$
As a result, the interference constraint (\ref{eq:ubiquitous_interference_constraint_v0})
can be equivalently transformed into a power constraint as
\begin{equation}
p_{n}\left(t\right)\le\bar{p}\left(t\right)\triangleq I_{\text{bs}}/\max_{m\in\mathcal{M}}g_{m}\left(t\right),\,\forall t.\label{eq:power_upper_constraint}
\end{equation}

Using the approximated throughput constraint (\ref{eq:rlx_exp_thp_c})
and interference-equivalent power constraint (\ref{eq:power_upper_constraint}),
the original problem can be relaxed into the following problem
\begin{align}
\mathscr{P}1:\quad\underset{\bm{P},\bm{L}}{\text{minimize}} & \quad F\left(\bm{P},\bm{L}\right)\nonumber \\
\text{subject to} & \quad\sum_{t\in\mathcal{T}}c_{n}\left(t\right)l_{n}\left(t\right)\ge S_{n},\forall n\label{eq:thp_c_v2}\\
 & \quad p_{n}\left(t\right)\in\mathcal{P}^{\prime},\forall n,t,\,\boldsymbol{l}\left(t\right)\in\mathcal{L},\forall t\label{eq:resources_constraint_v2}
\end{align}
where $\mathcal{P}^{\prime}\triangleq\{p(t):0\le p(t)\le\bar{p}(t)\}$.

However, this problem is non-convex due to the indicator function
in the objective function in (\ref{eq:erg_def}). The state-of-the-art
techniques to handle the indicator function include \ac{lr} \cite{Fis:J81}
and \ac{sca} \cite{Raz:T14}, but, the optimality is unclear, and
the computational complexity can be high.

In the rest of the paper, we develop a cost-aware relax-then-round
scheme to solve problem $\mathscr{P}1$. First, the indicator functions
$\mathbb{I}\{\sum_{n\in\mathcal{N}}l_{n}(t)>0\}$ in the objective
(\ref{eq:erg_def}) are relaxed to continuous ones, then we transform
the non-convex problem to a convex equivalent problem and propose
an efficient algorithm. Subsequently in Section \ref{sec:Rounding-Policy},
a cost-aware rounding strategy is developed to ensure asymptotic optimality.

\subsection{Convex relaxation}

\subsubsection{Relaxation from indicator function\label{subsec:rlx_from_IP}}

Recall that the presence of the indicator function in the objective
(\ref{eq:erg_def}) implies the need for integer programming to optimize
the on-off mode. We replace the indicator $\mathbb{I}\{\sum_{n\in\mathcal{N}}l_{n}(t)>0\}$
by a continuous variable $\sum_{n\in\mathcal{N}}l_{n}(t)$ which can
take continuous value from 0 to 1. This relaxation results in the
following objective
\begin{equation}
\tilde{F}\left(\bm{P},\bm{L}\right)=\sum_{t\in\mathcal{T}}\sum_{n\in\mathcal{N}}\left(p_{n}\left(t\right)+\lambda\right)l_{n}\left(t\right).\label{eq:erg_relax}
\end{equation}

It is clear that the relaxed objective (\ref{eq:erg_relax}) gives
a lower bound of the cost function (\ref{eq:erg_def}), because $\sum_{n\in\mathcal{N}}l_{n}(t)\le\mathbb{I}\{\sum_{n\in\mathcal{N}}l_{n}(t)>0\}$
with equality for $\sum_{n\in\mathcal{N}}l_{n}(t)=0$ or $1$. Such
a property motivates our rounding strategy which will be discussed
in Section \ref{sec:Rounding-Policy}. As a result, problem $\mathscr{P}1$
becomes 
\[
\mathscr{P}2:\quad\underset{\bm{P},\bm{L}}{\text{minimize}}\quad\tilde{F}\left(\bm{P},\bm{L}\right),\quad\text{subject to}\quad\text{(\ref{eq:thp_c_v2}), (\ref{eq:resources_constraint_v2})}.
\]

Note that the optimal value of problem $\mathscr{P}2$ is a lower
bound on the optimal value of $\mathscr{P}1$, because $\tilde{F}(\bm{P},\bm{L})\le F(\bm{P},\bm{L})$
for all $(\bm{P},\bm{L})$ and the feasible sets of the two problems
are the same.

\subsubsection{Convex transformation\label{subsec:Convex-transformation}}

To handle the non-convex constraint (\ref{eq:thp_c_v2}), we introduce
a new variable
\begin{equation}
\phi_{n}\left(t\right)\triangleq\left(\text{\ensuremath{\log\left(1+p_{n}\left(t\right)g_{n}\left(t\right)\right)}}-\epsilon_{n}\left(t\right)\right)l_{n}\left(t\right)\label{eq:local_thp_def}
\end{equation}
to replace the variable $p_{n}(t)$. As a result, the throughput constraint
(\ref{eq:thp_c_v2}) becomes linear
\begin{equation}
\sum_{t\in\mathcal{T}}\phi_{n}\left(t\right)\ge S_{n},\,\forall n\label{eq:thp_trans}
\end{equation}
and the power constant in (\ref{eq:resources_constraint_v2}) also
becomes linear
\begin{equation}
\phi_{n}\left(t\right)\in\left[-\epsilon_{n}\left(t\right)l_{n}\left(t\right),\bar{c}_{n}\left(t\right)l_{n}\left(t\right)\right]\label{eq:lc_trans}
\end{equation}
where $\bar{c}_{n}(t)\triangleq\log(1+\bar{p}_{n}(t)g_{n}(t))-\epsilon_{n}(t)$,
due to the monotonic increasing property of the function $\phi_{n}(t)$
over $p_{n}(t)$. In addition, the variable $p_{n}(t)$ can be derived
to a function of $\phi_{n}(t)$ and $l_{n}(t)$, as 
\begin{equation}
p_{n}\left(t\right)=\begin{cases}
\left(2^{\phi_{n}\left(t\right)/l_{n}\left(t\right)+\epsilon_{n}\left(t\right)}-1\right)/g_{n}\left(t\right) & l_{n}\left(t\right)>0\\
0 & l_{n}\left(t\right)=0.
\end{cases}\label{eq:p_over_phi_l}
\end{equation}
As a result, the cost function (\ref{eq:erg_relax}) can be expressed
as 
\begin{equation}
\sum_{t\in\mathcal{T}}\sum_{n\in\mathcal{N}}\left(\frac{2^{\frac{\phi_{n}\left(t\right)}{l_{n}\left(t\right)}+\epsilon_{n}\left(t\right)}-1}{g_{n}\left(t\right)}+\lambda\right)l_{n}\left(t\right)\label{eq:erg_trans}
\end{equation}
which is convex, because for each $t$ and $n$, the summand $((2^{\phi_{n}(t)/l_{n}(t)+\epsilon_{n}(t)}-1)/g_{n}(t)+\lambda)l_{n}(t)$
is a perspective function of a convex function $\varphi(x)\triangleq(2^{x+\epsilon_{n}(t)}-1)/g_{n}(t)+\lambda$,
and thus is convex \cite{boyd2004convex}.

Using the transformed objective function (\ref{eq:erg_trans}), and
the transformed constraints (\ref{eq:thp_trans})\textendash (\ref{eq:lc_trans}),
the problem $\mathscr{P}2$ is transformed into the following problem
\begin{align*}
\mathscr{P}3:\quad\underset{\bm{\Phi},\bm{L}}{\text{minimize}} & \quad\sum_{t\in\mathcal{T}}\sum_{n\in\mathcal{N}}\left(\frac{2^{\frac{\phi_{n}\left(t\right)}{l_{n}\left(t\right)}+\epsilon_{n}\left(t\right)}-1}{g_{n}\left(t\right)}+\lambda\right)l_{n}\left(t\right)\\
\text{subject to} & \quad\sum_{t\in\mathcal{T}}\phi_{n}\left(t\right)\ge S_{n},\forall n\\
 & \quad\phi_{n}\left(t\right)\in\left[-\epsilon_{n}\left(t\right)l_{n}\left(t\right),\bar{c}_{n}\left(t\right)l_{n}\left(t\right)\right],\forall n,t\\
 & \quad\boldsymbol{l}\left(t\right)\in\mathcal{L},\forall t
\end{align*}
where $\bm{\Phi}\triangleq[\phi_{n}\left(t\right)]_{t\in\mathcal{T},n\in\mathcal{N}}\in\mathbb{R}^{T\times N}$.

The transformed problem $\mathscr{P}3$ is convex because the objective
function is convex and all the constraints are linear. Therefore,
it can be solved efficiently.

\subsection{Optimal solution for $N=1$\label{sec:Dual_algorithm}}

While problem $\mathscr{P}3$ is convex, solving problem $\mathscr{P}3$
by general solvers, such as the (iterative) subgradient or interior-point
methods, has complexity higher than $\mathcal{O}(N^{3}T^{3})$ per
iteration \cite{Ber:b15}, which will be overwhelming for a large
number of time slots $T$ when we want to optimize over a large time
horizon or for smaller time slots. As a result, the key challenge
in radio-map-assisted predictive communication is to develop efficient
algorithms to solve the optimization problem for a large $T$. To
tackle this challenge, we first develop a semi-closed form solution
to $\mathscr{P}3$ under $N=1$.

For simplicity, we omit the subscript $n$ in the variables such as
$\phi_{n}$ and $l_{n}$ since $N=1$. Then in problem $\mathscr{P}3$,
the resource allocation variables become $\bm{\phi}=[\phi(t)]_{t\in\mathcal{T}}$
and $\bm{l}=[l(t)]_{t\in\mathcal{T}}$, and the cost is written as
$\sum_{t\in\mathcal{T}}((2^{\phi(t)/l(t)+\epsilon(t)}-1)/g(t)+\lambda)l(t)$.
Since $\mathscr{P}3$ is convex, the solution to $\mathscr{P}3$ satisfies
the \ac{kkt} conditions which can be derived as follows.
\begin{lem}
\label{lem:unique_root}let $x_{0}$ be the solution to $\vartheta\left(x\right)=0$,
where 
\begin{equation}
\vartheta\left(x\right)\triangleq\big(2^{x+\epsilon\left(t\right)}-1\big)/g\left(t\right)+\lambda-x\big(\ln2\cdot2^{x+\epsilon\left(t\right)}/g\left(t\right)\big).\label{eq:def_theta_c}
\end{equation}
The root $x_{0}$ is unique in the region $x\geq0$, Moreover, $0<x_{0}\le\max\{2/\ln2,\log(\lambda g(t)-1)-\epsilon(t)\}$.
\end{lem}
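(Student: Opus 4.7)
The plan is to abbreviate $g = g(t)$, $\epsilon = \epsilon(t)$ and first rewrite
$$\vartheta(x) = \frac{2^{x+\epsilon}}{g}\bigl(1 - x\ln 2\bigr) + \lambda - \frac{1}{g},$$
which exposes a single transcendental factor and makes differentiation clean. From here I would split the claim into two pieces: (i) existence and uniqueness of a positive root, handled by monotonicity plus endpoint values; and (ii) the explicit upper bound, handled by substituting the claimed bound into $\vartheta$ and showing $\vartheta \le 0$ there, so that monotonicity forces $x_0$ to lie below it.

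For monotonicity, a direct computation gives
$$\vartheta'(x) = \frac{\ln 2 \cdot 2^{x+\epsilon}}{g}(1 - x\ln 2) - \frac{\ln 2 \cdot 2^{x+\epsilon}}{g} = -x(\ln 2)^2 \frac{2^{x+\epsilon}}{g},$$
which is strictly negative for every $x > 0$. So $\vartheta$ is strictly decreasing on $(0,\infty)$. At the left endpoint, $\vartheta(0) = (2^{\epsilon}-1)/g + \lambda$, and since the paragraph after Lemma \ref{lem:Lower_bound_of_thp_exp} observes that $\epsilon > 0$ (and $\lambda > 0$ by the problem setup), we get $\vartheta(0) > 0$. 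As $x \to \infty$ the $(1 - x\ln 2)$ factor drives $\vartheta(x) \to -\infty$. Strict monotonicity plus continuity then gives a unique root $x_0 \in (0,\infty)$.

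For the upper bound I would do a short case analysis on the sign of $\lambda g - 1$, which is where a bit of care is needed because the quantity $\log(\lambda g - 1)$ only makes sense when $\lambda g > 1$. \textbf{Case A:} if $\lambda g > 1$ and $x^{\star} \triangleq \log(\lambda g - 1) - \epsilon \ge 2/\ln 2$, then substituting $2^{x^{\star}+\epsilon} = \lambda g - 1$ into the rewritten $\vartheta$ yields
$$\vartheta(x^{\star}) = \bigl(\lambda - 1/g\bigr)\bigl(1 - x^{\star}\ln 2\bigr) + \lambda - 1/g = \bigl(\lambda - 1/g\bigr)\bigl(2 - x^{\star}\ln 2\bigr),$$
and both factors are nonnegative/nonpositive in the right way to force $\vartheta(x^{\star}) \le 0$. \textbf{Case B:} otherwise the maximum equals $2/\ln 2$. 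Plugging $x^{\star} = 2/\ln 2$ into $\vartheta$ gives $\vartheta(x^{\star}) = -2^{2/\ln 2 + \epsilon}/g + \lambda - 1/g$, which is $\le 0$ either because $\lambda g \le 1$ directly, or because $\log(\lambda g - 1) - \epsilon < 2/\ln 2$ implies $\lambda g - 1 < 2^{2/\ln 2 + \epsilon}$. In both cases, $\vartheta$ evaluated at the claimed upper bound is $\le 0$, so by the strict decrease established earlier, $x_0$ cannot exceed that value.

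The main obstacle I anticipate is purely bookkeeping in the upper-bound step: the expression $\log(\lambda g - 1)$ is only meaningful when $\lambda g > 1$, so I would phrase the statement as a maximum over the two admissible candidates and verify $\vartheta \le 0$ uniformly across the cases, rather than trying to write a single closed-form inequality. Beyond that, every step is a one-line algebraic manipulation, so no deeper analytic tools are required.
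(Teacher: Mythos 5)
Your proposal is correct and follows essentially the same route as the paper: the same derivative computation $\vartheta'(x)=-x(\ln 2)^{2}2^{x+\epsilon(t)}/g(t)$ establishing strict decrease, the positivity of $\vartheta(0)$, and the verification that $\vartheta\le 0$ at $\max\{2/\ln 2,\log(\lambda g(t)-1)-\epsilon(t)\}$. Your only departure is the explicit case split on the sign of $\lambda g(t)-1$, which is a slight refinement: the paper's chain of inequalities works because $2^{x+\epsilon(t)}\ge\lambda g(t)-1$ holds trivially when $\lambda g(t)\le 1$, but it does not spell this out, whereas you do.
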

\begin{proof}
See Appendix \ref{Proof_lem_unique_root}
\end{proof}
Denote $\hat{c}(t)$ as the solution to $\vartheta(x;g(t),\epsilon(t))=0$
for $x\geq0$. From Lemma 2, the mapping from $g(t)$ and $\epsilon(t)$
to $\hat{c}(t)$ is unique. In addition, denote $\hat{\mu}(t)\triangleq\ln2\cdot2^{\hat{c}(t)+\epsilon(t)}/g(t)$,
$\bar{\mu}(t)\triangleq\ln2\cdot(1/g(t)+\bar{p}(t))$, and $\tilde{\mu}(t)\triangleq(\bar{p}(t)+\lambda)/(\log(1+\bar{p}(t)g(t))-\epsilon(t))$.
We have the following result to characterize the solution to $\mathscr{P}3$
under $N=1$.
\begin{prop}
\label{prop:op_single}(Optimal solution under $N=1$) The optimal
solution $(\bm{\phi},\bm{l})$ to $\mathscr{P}3$ under $N=1$ is
given by

(i) for\textup{ $t\in\mathcal{T}_{1}\triangleq\left\{ t\in\mathcal{T}:\hat{c}(t)\le\bar{c}(t)\right\} $}
\begin{equation}
\phi\left(t\right)=\begin{cases}
0 & \mu<\hat{\mu}\left(t\right)\\
\hat{c}\left(t\right)\tilde{l}\left(t\right) & \mu=\hat{\mu}\left(t\right)\\
\log\left(\mu g\left(t\right)/\ln2\right)-\epsilon\left(t\right) & \hat{\mu}\left(t\right)<\mu<\bar{\mu}\left(t\right)\\
\bar{c}\left(t\right) & \mu\ge\bar{\mu}\left(t\right)
\end{cases}\label{eq:ct_1}
\end{equation}
\begin{equation}
l\left(t\right)=\mathbb{I}\left\{ \mu>\hat{\mu}\left(t\right)\right\} +\tilde{l}\left(t\right)\mathbb{I}\left\{ \mu=\hat{\mu}\left(t\right)\right\} \label{eq:lt_1}
\end{equation}

(ii) for \textup{$t\in\mathcal{T}_{2}\triangleq\left\{ t\in\mathcal{T}:\hat{c}(t)>\bar{c}(t)\right\} $}
\begin{equation}
\phi\left(t\right)=\begin{cases}
0 & \mu<\tilde{\mu}\left(t\right)\\
\bar{c}\left(t\right)\tilde{l}\left(t\right) & \mu=\tilde{\mu}\left(t\right)\\
\bar{c}\left(t\right) & \mu>\tilde{\mu}\left(t\right)
\end{cases}\label{eq:ct_2}
\end{equation}
\begin{equation}
l\left(t\right)=\mathbb{I}\left\{ \mu>\tilde{\mu}\left(t\right)\right\} +\tilde{l}\left(t\right)\mathbb{I}\left\{ \mu=\tilde{\mu}\left(t\right)\right\} \label{eq:lt_2}
\end{equation}
where $\{\mu\ge0,\tilde{\boldsymbol{l}}\triangleq[\tilde{l}(t)]_{t\in\mathcal{T}}\}$
are parameters chosen to satisfy 
\begin{equation}
\tilde{\Upsilon}(\mu,\tilde{\boldsymbol{l}})\triangleq\sum_{t\in\mathcal{T}}\phi\left(t\right)=S\label{eq:dual-obj}
\end{equation}
and $\tilde{l}(t)\in[0,1],\,\forall t\in\mathcal{T}$.
\end{prop}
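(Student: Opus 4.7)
The plan is to dualize only the global throughput constraint $\sum_t \phi(t) \ge S$ with a single multiplier $\mu \ge 0$ and to keep the per-slot box constraints on $(\phi(t), l(t))$ inside the inner problem. Because $\mathscr{P}3$ is convex and Slater's condition holds, the KKT conditions are necessary and sufficient, and the partial Lagrangian decouples across time slots. The proof therefore reduces to characterizing, for each $t$, the minimizer of
\begin{equation*}
\min_{l \in [0,1],\ \phi \in [-\epsilon(t)\, l,\ \bar{c}(t)\, l]}\ l\!\left(\frac{2^{\phi/l + \epsilon(t)}-1}{g(t)} + \lambda \right) - \mu\,\phi
\end{equation*}
as a function of $\mu$, and then selecting $(\mu,\tilde{\boldsymbol l})$ so that the primal throughput equation $\sum_t \phi(t) = S$ is satisfied (tightness follows from complementary slackness whenever $\mu > 0$).

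For a single slot, the first step is the change of variables $x = \phi/l$ on $\{l > 0\}$, which rewrites the per-slot objective as $l\cdot \psi_\mu(x)$ with $\psi_\mu(x) \triangleq (2^{x+\epsilon(t)}-1)/g(t) + \lambda - \mu x$ on $x \in [-\epsilon(t),\bar{c}(t)]$. Since $\psi_\mu$ is strictly convex, its unconstrained minimizer is the water-filling expression $x^\star(\mu) = \log(\mu g(t)/\ln 2) - \epsilon(t)$. Because the objective is \emph{linear} in $l$ after this reduction, the optimal $l$ is a sign rule on the clearing value $h(\mu) \triangleq \min_x \psi_\mu(x)$: $l=0$ if $h(\mu) > 0$, $l=1$ if $h(\mu) < 0$, and any $l \in [0,1]$ if $h(\mu) = 0$. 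This boundary case is exactly where the free parameter $\tilde{l}(t)$ enters the proposition's statement, and it is how the on-off structure is recovered from a purely convex relaxation; I expect this to be the main technical obstacle, since a non-unique dual minimizer has to be turned into a primal-feasible schedule.

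To identify the thresholds, Case~(i) exploits the key identity $\psi_\mu(x) = \vartheta(x)$ whenever $\mu = \ln 2 \cdot 2^{x+\epsilon(t)}/g(t)$, i.e.\ precisely at the water-filling point $x^\star(\mu) = x$. Together with Lemma~\ref{lem:unique_root} this pins down $h(\hat{\mu}(t)) = 0$ with interior minimizer $\hat{c}(t)$, giving the tie-breaking case $\phi = \hat{c}(t)\,\tilde{l}(t)$. The envelope identity $h'(\mu) = -x^\star(\mu)$ and $x^\star(\hat\mu(t)) = \hat{c}(t) > 0$ then force $h(\mu) < 0$ for $\mu > \hat\mu(t)$, so $l = 1$ and the water-filling formula applies until $x^\star$ saturates at $\bar{c}(t)$, which happens at $\mu = \bar{\mu}(t)$; this yields \eqref{eq:ct_1}--\eqref{eq:lt_1}. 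In Case~(ii), with $\hat{c}(t) > \bar{c}(t)$, the unconstrained water-filling root is infeasible, so $l = 1$ forces $\phi = \bar{c}(t)$, and the governing sign condition becomes $\psi_\mu(\bar{c}(t)) = \bar{p}(t) + \lambda - \mu\,\bar{c}(t)$, whose unique zero is $\tilde{\mu}(t)$, producing \eqref{eq:ct_2}--\eqref{eq:lt_2}. Finally, existence of $(\mu,\tilde{\boldsymbol l})$ matching \eqref{eq:dual-obj} follows from the fact that the map $\mu \mapsto \tilde{\Upsilon}(\mu,\tilde{\boldsymbol l})$ is non-decreasing in $\mu$ for $\tilde{l}(t)\equiv 0$ and the ties at the thresholds $\{\hat\mu(t),\tilde\mu(t)\}$ are precisely where the slack $\tilde{l}(t) \in [0,1]$ can be adjusted to realize any feasible $S$.
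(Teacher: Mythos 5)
Your proposal is correct, and it reaches the paper's characterization by a related but noticeably different route. The paper proves Proposition~\ref{prop:op_single} by writing the full KKT system of the reduced problem (multipliers for the throughput constraint \emph{and} all box constraints on $\phi(t)$ and $l(t)$), extracting the water-filling law for $c(t)=\phi(t)/l(t)$ from stationarity, and then obtaining the on--off rule for $l(t)$ through a generalized-efficiency argument (quasi-concavity of $c(t)/\varphi(c(t))$, whose maximizer is the root $\hat{c}(t)$ of $\vartheta$ from Lemma~\ref{lem:unique_root}) plus a separate marginal-rate lemma to justify $\tilde{\mu}(t)$ on $\mathcal{T}_2$. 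You instead dualize only the coupling constraint $\sum_t\phi(t)\ge S$, observe that after the perspective substitution $x=\phi/l$ the per-slot partial Lagrangian is linear in $l$, and read the activation rule from the sign of the per-slot clearing value $h(\mu)=\min_x\psi_\mu(x)$, using the identity $\psi_\mu(x^\star(\mu))=\vartheta(x^\star(\mu))$ and Danskin/envelope monotonicity to locate the thresholds $\hat{\mu}(t)$ and $\tilde{\mu}(t)$; the tie case $h(\mu)=0$ is exactly where $\tilde{l}(t)$ enters. Your version is more compact, avoids the multiplier bookkeeping, and makes the economic ``activate a slot iff its minimal cost-minus-reward is negative'' structure explicit, while the paper's full KKT derivation produces the intermediate relations (e.g., the explicit form of $c^*(t)$ versus $\mu^*$) and the efficiency interpretation of $\hat{c}(t)$ that it reuses elsewhere. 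One small step you should make explicit in case (ii): for $t\in\mathcal{T}_2$ and $\mu\le\bar{\mu}(t)$ the interior water-filling point is feasible, so you need $h(\mu)=\vartheta(x^\star(\mu))>0$ there (which holds since $x^\star(\mu)\le\bar{c}(t)<\hat{c}(t)$ and $\vartheta>0$ below $\hat{c}(t)$); this shows $\tilde{\mu}(t)>\bar{\mu}(t)$ and confirms that at the switching point the constrained minimizer sits at $\bar{c}(t)$, so the governing sign condition is indeed $\psi_\mu(\bar{c}(t))=\bar{p}(t)+\lambda-\mu\bar{c}(t)$. You should also state explicitly that $\mu=0$ is impossible (it would give $l(t)\equiv0$ and zero throughput), so complementary slackness forces the equality $\tilde{\Upsilon}(\mu,\tilde{\boldsymbol{l}})=S$, as the paper notes.
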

\begin{proof}
See Appendix \ref{sec:proof_op_single}.
\end{proof}
As a result, we can obtain the optimal solution to $\mathscr{P}3$
by solving equation (\ref{eq:dual-obj}). In addition, the throughput
function $\tilde{\Upsilon}(\mu,\tilde{\boldsymbol{l}})$ is proven
to be monotonically non-decreasing over $\mu$ and $\tilde{\boldsymbol{l}}$,
as shown in the following proposition.
\begin{prop}
\label{prop:monotonicity}(Monotonicity) For any $\mu_{1}>\mu_{2}\ge0$
and $\boldsymbol{1}\succcurlyeq\tilde{\boldsymbol{l}}_{1},\tilde{\boldsymbol{l}}_{2}\succcurlyeq\boldsymbol{0}$,\footnote{Here, the notation $\boldsymbol{a}\succcurlyeq\boldsymbol{b}$ means
that $\forall i$, $a_{i}\ge b_{i}$. The notation $\boldsymbol{a}\succ\boldsymbol{b}$
means that $\forall i$, $a_{i}\ge b_{i}$, and $\exists i,a_{i}>b_{i}$.} it holds that $\tilde{\Upsilon}(\mu_{1},\tilde{\boldsymbol{l}})\ge\tilde{\Upsilon}(\mu_{2},\tilde{\boldsymbol{l}})$.
For any $\boldsymbol{1}\succcurlyeq\tilde{\boldsymbol{l}}_{1}\succ\tilde{\boldsymbol{l}}_{2}\succcurlyeq\boldsymbol{0}$
and $\mu\ge0$, it holds that $\tilde{\Upsilon}(\mu,\tilde{\boldsymbol{l}}_{1})\ge\tilde{\Upsilon}(\mu,\tilde{\boldsymbol{l}}_{2})$.
\end{prop}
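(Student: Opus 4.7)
The plan is to exploit the separability $\tilde{\Upsilon}(\mu,\tilde{\boldsymbol{l}})=\sum_{t\in\mathcal{T}}\phi(t)$, where each summand $\phi(t)$ is determined solely by $\mu$ and $\tilde{l}(t)$ (together with the per-slot quantities $g(t),\epsilon(t),\bar{p}(t)$) through (\ref{eq:ct_1}) or (\ref{eq:ct_2}). It therefore suffices to establish per-slot monotonicity of $\phi(t)$ in each of its two arguments and then sum over $t$.

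First I would fix $t$ and $\tilde{l}(t)$ and verify that $\phi(t)$ is non-decreasing in $\mu$. For $t\in\mathcal{T}_{1}$, the four branches in (\ref{eq:ct_1}) are individually non-decreasing in $\mu$: the first and last branches are constants ($0$ and $\bar{c}(t)$), and the middle branch $\log(\mu g(t)/\ln 2)-\epsilon(t)$ is strictly increasing in $\mu$. I then check that the branches match at the breakpoints. At $\mu=\hat{\mu}(t)=\ln 2\cdot 2^{\hat{c}(t)+\epsilon(t)}/g(t)$, direct substitution gives $\log(\hat{\mu}(t)g(t)/\ln 2)-\epsilon(t)=\hat{c}(t)$, so the function jumps from $0$ up to $\hat{c}(t)$, and the intermediate value $\hat{c}(t)\tilde{l}(t)\in[0,\hat{c}(t)]$ lies within this jump. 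At $\mu=\bar{\mu}(t)=\ln 2\cdot(1/g(t)+\bar{p}(t))$, substitution yields $\log(\bar{\mu}(t)g(t)/\ln 2)-\epsilon(t)=\log(1+\bar{p}(t)g(t))-\epsilon(t)=\bar{c}(t)$, so the middle branch connects continuously to the final branch. Hence $\phi(t)$ is non-decreasing in $\mu$ on $\mathcal{T}_{1}$. For $t\in\mathcal{T}_{2}$, the argument is even simpler: $\phi(t)$ jumps from $0$ to $\bar{c}(t)$ at $\mu=\tilde{\mu}(t)$, with intermediate value $\bar{c}(t)\tilde{l}(t)\in[0,\bar{c}(t)]$. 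Summing over $t$ delivers the first claim.

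Second, I would fix $\mu$ and show per-slot monotonicity in $\tilde{l}(t)$. Inspecting (\ref{eq:ct_1})--(\ref{eq:ct_2}), $\phi(t)$ depends on $\tilde{l}(t)$ only in the boundary cases $\mu=\hat{\mu}(t)$ (where $\phi(t)=\hat{c}(t)\tilde{l}(t)$) and $\mu=\tilde{\mu}(t)$ (where $\phi(t)=\bar{c}(t)\tilde{l}(t)$); in every other branch $\phi(t)$ is independent of $\tilde{l}(t)$. In the two boundary cases $\phi(t)$ is a positive multiple of $\tilde{l}(t)$, since $\hat{c}(t)>0$ by Lemma \ref{lem:unique_root}, and slots with $\bar{c}(t)\le 0$ can be discarded without loss of generality as they contribute no throughput. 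Consequently, if $\tilde{\boldsymbol{l}}_{1}\succ\tilde{\boldsymbol{l}}_{2}$, then $\phi(t)$ under $\tilde{l}_{1}(t)$ is no smaller than under $\tilde{l}_{2}(t)$ for every $t$, and summation over $\mathcal{T}$ yields $\tilde{\Upsilon}(\mu,\tilde{\boldsymbol{l}}_{1})\ge\tilde{\Upsilon}(\mu,\tilde{\boldsymbol{l}}_{2})$.

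The only real subtlety, rather than a genuine obstacle, is the bookkeeping at the two breakpoints $\hat{\mu}(t)$ and $\bar{\mu}(t)$: one must verify that the piecewise definition is consistent and that the transitional value indexed by $\tilde{l}(t)$ falls between the left- and right-hand limits. Once those identities are confirmed by substituting the definitions of $\hat{\mu}(t)$, $\bar{\mu}(t)$, and $\hat{c}(t)$, both monotonicity claims follow termwise from the structure of (\ref{eq:ct_1})--(\ref{eq:lt_2}).
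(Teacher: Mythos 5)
Your proposal is correct and follows essentially the same route as the paper's own proof: reduce to termwise monotonicity of $\phi(t)$, verify non-decrease in $\mu$ branch by branch with the breakpoint values at $\hat{\mu}(t)$ and $\bar{\mu}(t)$ (left limit $0$, intermediate value $\hat{c}(t)\tilde{l}(t)$, right limit $\hat{c}(t)$, and continuity at $\bar{\mu}(t)$), and then observe that $\phi(t)$ depends on $\tilde{l}(t)$ only at the boundary multipliers, where it is a non-negative linear function of $\tilde{l}(t)$. If anything, you treat the $\mathcal{T}_{2}$ slots slightly more explicitly than the paper does, but the argument is the same.
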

\begin{proof}
See Appendix \ref{sec:proof_prop_monotonicity}.
\end{proof}
\begin{figure}
\begin{centering}
\includegraphics[width=0.7\columnwidth]{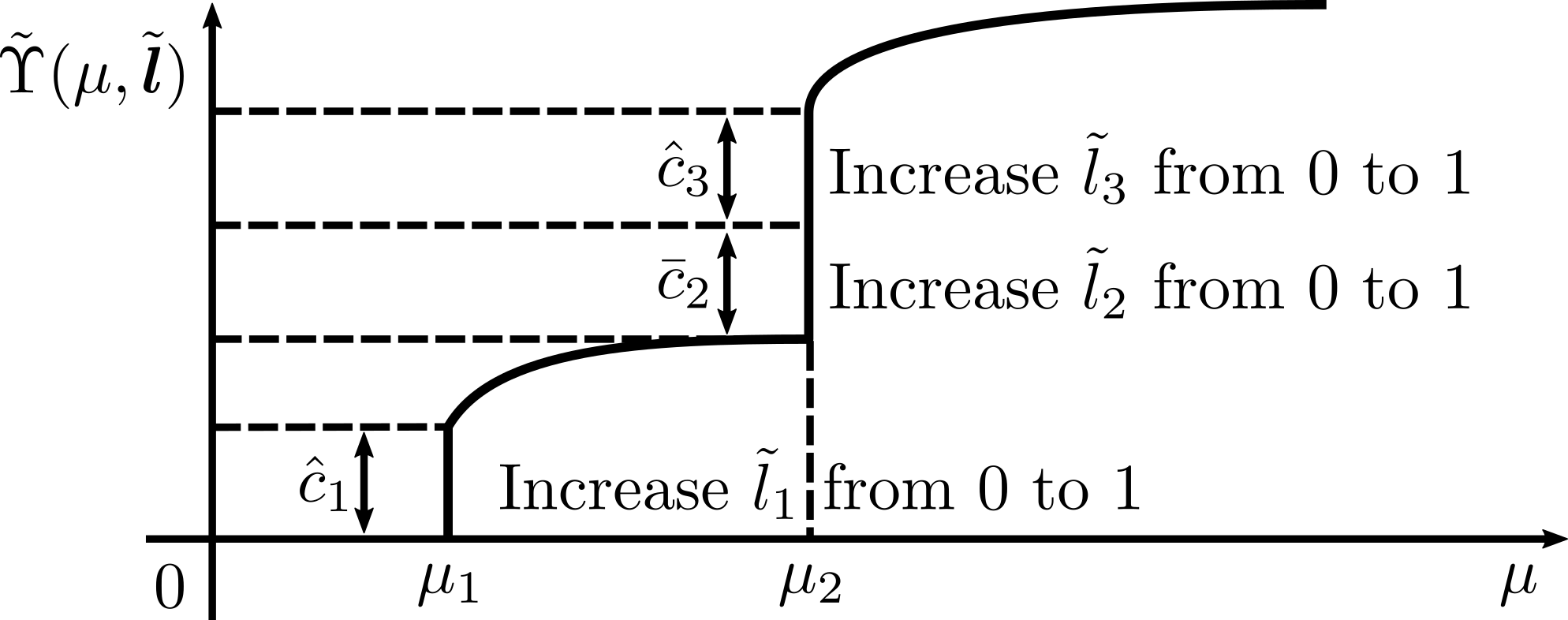}
\par\end{centering}
\caption{\label{fig:ill_monotonicity}Illustration of $\tilde{\Upsilon}(\mu,\tilde{\boldsymbol{l}})$
over $\mu$ and $\tilde{\boldsymbol{l}}$, where one slot is activated
when $\mu=\mu_{1}$, and two slots are activated when $\mu=\mu_{2}$.}
\end{figure}
Fig. \ref{fig:ill_monotonicity} illustrates monotonicity of $\tilde{\Upsilon}(\mu,\tilde{\boldsymbol{l}})$.
It is shown that (i) When $\mu<\mu_{1}$, $\tilde{\Upsilon}=0$; (ii)
When $\mu>\mu_{1}$, $\tilde{\Upsilon}$ is increasing over $\mu$;
(iii) When $\mu=\mu_{i}$, $\tilde{\Upsilon}$ is increasing over
$\tilde{l}_{i}$. Based on the monotonicity, one solution $\mu^{*}$
and $\tilde{\boldsymbol{l}}^{*}$ can be found using bisection search
as described in Algorithm \ref{alg:Fast-algorithm-n1}.

\begin{algorithm}
\# Bisection search to determine $\mu^{*}$.
\begin{enumerate}
\item Set $\mu_{\max}=\max_{t\in\mathcal{T}}\{\bar{\mu}(t),\tilde{\mu}(t)\}$,
$\mu_{\min}=0$;
\item Set $\mu=(\mu_{\max}+\mu_{\min})/2$, if $\tilde{\Upsilon}(\mu,\boldsymbol{0})>S$,
set $\mu_{\max}=\mu$; otherwise, set $\mu_{\min}=\mu$;
\item Repeat step 2) until $|\mu_{\max}-\mu_{\min}|$ is small enough;
\item Set $\mu^{*}=\mu_{\min}$.
\end{enumerate}
\# Bisection search to determine $\tilde{\boldsymbol{l}}^{*}$.
\begin{enumerate}
\item[5)]  Set $l_{\max}=1$, $l_{\min}=0$;
\item[6)] Set $l=(l_{\max}+l_{\min})/2$ and $\tilde{\boldsymbol{l}}=l\cdot\boldsymbol{1}$,
if $\tilde{\Upsilon}(\mu^{*},\tilde{\boldsymbol{l}})>S$, set $l_{\max}=l$;
otherwise, set $l_{\min}=l$;
\item[7)] Repeat step 2) until $|l_{\max}-l_{\min}|$ is small enough;
\item[8)] Set $l^{*}=l_{\max}$ and $\tilde{\boldsymbol{l}}=l^{*}\cdot\boldsymbol{1}$.
\end{enumerate}
\# Set $\boldsymbol{\phi}^{*}$ and $\boldsymbol{l}^{*}$ according
to (\ref{eq:ct_1}) \textendash{} (\ref{eq:lt_2}).

\caption{\label{alg:Fast-algorithm-n1}Fast dual-based algorithm for $N=1$.}
\end{algorithm}

The computational complexity of Algorithm \ref{alg:Fast-algorithm-n1}
is $\mathcal{O}(T)$ consisted of $\mathcal{O}(1)$ steps for the
bisection search of $\mu^{*}$ and $\tilde{\boldsymbol{l}}^{*}$,
where each step requires $\mathcal{O}(T)$ for calculating $\tilde{\Upsilon}(\mu,\tilde{\boldsymbol{l}})$.

\subsection{Efficient algorithm for $N>1$}

It is observed from $\mathscr{P}3$ that for $N>1$, the variables
are coupled over $n$ only by the constraint $\bm{l}\in\mathcal{L}$.
This observation suggests that $\mathscr{P}3$ can be decomposed into
$N$ subproblems, provided the coupled constraint involving $l_{n}(t)$
can be handled. To this end, a partial-dual-based algorithm is proposed,
where the partial dual problem is given by
\[
\underset{\boldsymbol{v}\succeq0}{\text{maximize}}\quad q\left(\boldsymbol{v}\right)
\]
where $q\left(\boldsymbol{\boldsymbol{v}}\right)$ is the partial
dual function over the Lagrangian parameter $\boldsymbol{\boldsymbol{v}}=[v(t)]_{t\in\mathcal{T}}$,
as given by the value of the following problem
\begin{align*}
\text{\ensuremath{\mathscr{P}4}}:\quad\underset{\bm{\Phi},\bm{L}}{\text{minimize}} & \quad Q\left(\bm{\Phi},\bm{L};\boldsymbol{v}\right)\\
\text{subject to} & \quad\sum_{t\in\mathcal{T}}\phi_{n}\left(t\right)\ge S_{n},\forall n\\
 & \quad\phi_{n}\left(t\right)\in\left[-\epsilon_{n}\left(t\right)l_{n}\left(t\right),\bar{c}_{n}\left(t\right)l_{n}\left(t\right)\right],\forall n,t\\
 & \quad l_{n}\left(t\right)\in\left[0,1\right],\forall n,t
\end{align*}
and $Q(\bm{\Phi},\bm{L};\boldsymbol{v})\triangleq\sum_{t\in\mathcal{T}}\sum_{n\in\mathcal{N}}((2^{\phi_{n}(t)/l_{n}(t)+\epsilon_{n}(t)}-1)/g_{n}(t)+\lambda)l_{n}(t)+\sum_{t\in\mathcal{T}}v(t)(\sum_{n\in\mathcal{N}}l_{n}(t)-1)$.

Denote $\lambda\left(t\right)\triangleq\lambda+v\left(t\right)$,
the objective function can be expressed as $Q(\bm{\Phi},\bm{L};\boldsymbol{v})=\sum_{n\in\mathcal{N}}\sum_{t\in\mathcal{T}}((2^{\phi_{n}(t)/l_{n}(t)+\epsilon_{n}(t)}-1)/g_{n}(t)+\lambda(t))l_{n}(t)-\sum_{t\in\mathcal{T}}v\left(t\right)$.
Then the problem $\text{\ensuremath{\mathscr{P}4}}$ can be decoupled
to $N$ following sub-problem
\begin{align*}
\underset{\bm{\phi}_{n},\bm{l}_{n}}{\text{minimize}} & \quad\sum_{t\in\mathcal{T}}\left(\frac{2^{\frac{\phi_{n}\left(t\right)}{l_{n}\left(t\right)}+\epsilon_{n}\left(t\right)}-1}{g_{n}\left(t\right)}+\lambda\left(t\right)\right)l_{n}\left(t\right)-\sum_{t\in\mathcal{T}}\frac{v\left(t\right)}{N}\\
\text{subject to} & \quad\sum_{t\in\mathcal{T}}\phi_{n}\left(t\right)\ge S_{n}\\
 & \quad\phi_{n}\left(t\right)\in\left[-\epsilon_{n}\left(t\right)l_{n}\left(t\right),\bar{c}_{n}\left(t\right)l_{n}\left(t\right)\right],\forall t\\
 & \quad l_{n}\left(t\right)\in\left[0,1\right],\forall t.
\end{align*}

The above problem is equivalent to $\mathscr{P}3$ for $N=1$ case
discussed in Section \ref{sec:Dual_algorithm}, where the solution
is given in Proposition \ref{prop:op_single} with the constant penalty
$\lambda$ to be replaced by $\lambda\left(t\right)$, and a constant
$-\sum_{t\in\mathcal{T}}v\left(t\right)/N$ to be added into the objective.

For maximizing $q\left(\boldsymbol{v}\right)$, the gradient projection
method is used, and the algorithm stops when the $||\boldsymbol{v}_{k+1}-\boldsymbol{v}_{k}||_{F}$
is small enough. The detailed algorithm is shown in Algorithm \ref{alg:alternative_opt}.

\begin{algorithm}
\# Initialization process: set $\boldsymbol{v}_{0}$.
\begin{enumerate}
\item Alternatively optimize $(\boldsymbol{\phi}_{n},\boldsymbol{l}_{n})$
from $n=1$ to $N$ by Algorithm \ref{alg:Fast-algorithm-n1};
\item Update $\boldsymbol{v}_{k+1}$ using gradient projection method and
go to step 1 until $||\boldsymbol{v}_{k+1}-\boldsymbol{v}_{k}||_{F}$
is small enough.
\end{enumerate}
\# Output $(\boldsymbol{\Phi}^{(k)},\boldsymbol{L}^{(k)})$.

\caption{\label{alg:alternative_opt}Alternative optimization algorithm for
$N>1$.}
\end{algorithm}

Algorithm \ref{alg:alternative_opt} converges because the dual problem
is convex \cite{boyd2004convex}. Moreover, the converged values $(\boldsymbol{\Phi}^{(k)},\boldsymbol{L}^{(k)})$
is the optimal solution to $\mathscr{P}3$ because Slater\textquoteright s
condition holds and strong duality holds since the objective function
in $\mathscr{P}3$ is convex and all the constraints in $\mathscr{P}3$
are affine \cite{boyd2004convex}.

The computational complexity of Algorithm \ref{alg:alternative_opt}
is $\mathcal{O}(NT)$ per iteration, constructed by $N$ times of
$\mathcal{O}(T)$ for calculating $(\boldsymbol{\phi}_{n},\boldsymbol{l}_{n})$.

\section{Rounding Strategy\label{sec:Rounding-Policy}}

Although we can efficiently solve for $\mathscr{P}3$ via Algorithm
2, which equivalently solves for $\mathscr{P}2$, the solution may
not be optimal to $\mathscr{P}1$, since $\mathscr{P}2$ is only a
relaxed version of $\mathscr{P}1$. In some extreme cases, the solution
to $\mathscr{P}2$ can lead to an unexpectedly large gap from the
optimal value of $\mathscr{P}1$. In this section, we develop a cost-aware
rounding strategy to adjust the solution obtained from $\mathscr{P}2$,
and show that the solution with rounding strategy is asymptotically
optimal in small slot duration.

\subsection{Challenge due to non-unique solutions to $\mathscr{P}2$}

One particular scenario where the solution from problem $\mathscr{P}2$
yields a significant performance gap from $\mathscr{P}1$ is when
the channel $g_{n}(t)$ is constant over $t$. In this case, the optimal
solutions to $\mathscr{P}2$ are not unique, and a possible solution
is to allocate equal power to a number of time slots which can be
fractionally used, i.e., $\sum_{n\in\mathcal{N}}l_{n}(t)$ is strictly
less than $1$. This phenomenon can be understood by the water-filling
solution for capacity-achieving power allocation, where the time slots
with the same channel gain will be allocated with the same power and
the same frequency resource $l_{n}(t)$. While this solution is optimal
in minimizing the relaxed cost function $\tilde{F}$ in (\ref{eq:erg_relax}),
it is not optimal for the original cost function in (\ref{eq:erg_def}),
due to the discontinuous indicator function, $\ensuremath{\mathbb{I}\{\sum_{n\in\mathcal{N}}l_{n}(t)>0\}}$.

We construct a numerical example in Appendix \ref{sec:proof-of-gap-23}
and show that directly applying the solution obtained in $\mathscr{P}2$
may yield an additional cost of $(T-1)\lambda$ in (\ref{eq:erg_def})
above the minimum. Note that such a performance gap can be arbitrarily
large for an arbitrarily large duration $T$.

Here, we investigate the conditions that characterize the set of equivalent
solutions to$\mathscr{P}2$. Given a transmission strategy $(\bm{P},\bm{L})$,
let $\bar{\mathcal{S}}(\boldsymbol{L})$ denote the set of partially
used slots defined as $\{t\in\mathcal{T}:\sum_{n\in\mathcal{N}}l_{n}(t)\notin\{0,1\}\}$.
Let $\mathcal{T}_{n}(\boldsymbol{L})$ denote the set of slots allocated
to node $n$ as $\{t\in\mathcal{T}:l_{n}(t)>0\}$. Then, the set of
slots that are partially used and specifically used by node $n$ is
given by $\bar{\mathcal{S}}_{n}(\boldsymbol{L})\triangleq\bar{\mathcal{S}}(\boldsymbol{L})\cap\mathcal{T}_{n}(\boldsymbol{L})$.
\begin{prop}
\label{prop:opt_preserve_c}(Equivalent solutions to $\mathscr{P}2$)
For any optimal solution to $\mathscr{P}2$, $(\bm{P}^{*},\bm{L}^{*})$,
a feasible solution $(\bm{P},\bm{L})$ is also optimal if the following
conditions are satisfied: (i) $p_{n}(t)=p_{n}^{*}(t),\,\forall n\in\mathcal{N},t\in\mathcal{T}$;
(ii) $l_{n}(t)=l_{n}^{*}(t),\,\forall t\notin\bar{\mathcal{S}}_{n}(\boldsymbol{L}^{*})$,
$\forall n\in\mathcal{N}$; (iii) $\sum_{t\in\bar{\mathcal{S}}_{n}(\boldsymbol{L}^{*})}(p_{n}(t)+\lambda)l_{n}(t)=\sum_{t\in\bar{\mathcal{S}}_{n}(\boldsymbol{L}^{*})}(p_{n}(t)+\lambda)l_{n}^{*}(t)$,
$\forall n\in\mathcal{N}$.
\end{prop}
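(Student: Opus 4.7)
The plan is to exploit the additive separability of the relaxed cost function $\tilde{F}$ in (\ref{eq:erg_relax}) and split each per-receiver sum along the partition induced by $\bar{\mathcal{S}}_n(\bm{L}^{*})$. Since the proposition already assumes that $(\bm{P},\bm{L})$ is feasible for $\mathscr{P}2$, I do not have to re-verify the throughput, power-cap, or bandwidth constraints; I only have to show that the new solution attains the same value of the relaxed objective as the optimum $(\bm{P}^{*},\bm{L}^{*})$.

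Concretely, I would write
\[
\tilde{F}(\bm{P},\bm{L})=\sum_{n\in\mathcal{N}}\Biggl(\sum_{t\notin\bar{\mathcal{S}}_{n}(\bm{L}^{*})}(p_{n}(t)+\lambda)l_{n}(t)+\sum_{t\in\bar{\mathcal{S}}_{n}(\bm{L}^{*})}(p_{n}(t)+\lambda)l_{n}(t)\Biggr).
\]
For the first inner sum, condition (i) gives $p_{n}(t)=p_{n}^{*}(t)$ and condition (ii) gives $l_{n}(t)=l_{n}^{*}(t)$, so the partial sums coincide termwise with those of $(\bm{P}^{*},\bm{L}^{*})$. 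For the second inner sum, condition (i) alone forces $(p_{n}(t)+\lambda)=(p_{n}^{*}(t)+\lambda)$, and then condition (iii) is exactly the statement that the weighted $l_{n}$-sums over $\bar{\mathcal{S}}_{n}(\bm{L}^{*})$ match. Adding the two pieces recovers $\sum_{t}(p_{n}^{*}(t)+\lambda)l_{n}^{*}(t)$, and summing over $n$ yields $\tilde{F}(\bm{P},\bm{L})=\tilde{F}(\bm{P}^{*},\bm{L}^{*})$.

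Combining feasibility (by hypothesis) with the equality of objective values shows that $(\bm{P},\bm{L})$ attains the optimum of $\mathscr{P}2$, which is the claim. I expect no real obstacle: the proposition is essentially a bookkeeping statement about which degrees of freedom in $\bm{L}$ leave the relaxed cost invariant, and the three conditions are tailored precisely so that the decomposition above goes through termwise on the complement of $\bar{\mathcal{S}}_{n}(\bm{L}^{*})$ and aggregately on $\bar{\mathcal{S}}_{n}(\bm{L}^{*})$ itself. The only subtlety worth flagging is that $t\notin\bar{\mathcal{S}}_{n}(\bm{L}^{*})$ includes both fully-used slots ($\sum_{n}l_{n}^{*}(t)\in\{0,1\}$) and slots with $l_{n}^{*}(t)=0$, so condition (ii) in particular pins $l_{n}(t)=0$ on those zero slots, which is what makes the termwise matching of the first sum valid.
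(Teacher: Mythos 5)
Your cost computation is correct and coincides with the second half of the paper's argument, but it covers only the easy half of the proposition, because you discharge the hard half by fiat. You read ``feasible'' as a standing hypothesis that already includes the throughput constraint (\ref{eq:thp_c_v2}), so that all you must check is $\tilde{F}(\bm{P},\bm{L})=\tilde{F}(\bm{P}^{*},\bm{L}^{*})$. The substantive content of the proposition, however, is that conditions (i)--(iii) themselves \emph{preserve the throughput}: moving resource mass $l_{n}(t)$ among the partially used slots while only matching the aggregate cost $\sum_{t\in\bar{\mathcal{S}}_{n}(\bm{L}^{*})}(p_{n}(t)+\lambda)l_{n}(t)$ does not obviously keep $\sum_{t}c_{n}(t)l_{n}(t)\ge S_{n}$, since the capacities $c_{n}(t)$ generally differ across slots. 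The paper closes this gap with a separate lemma (proved by an exchange/contradiction argument) showing that at any optimum of $\mathscr{P}2$ the generalized efficiencies $\eta_{n}(t)=c_{n}^{*}(t)/(p_{n}^{*}(t)+\lambda)$ are identical on all slots in $\bar{\mathcal{S}}_{n}(\bm{L}^{*})$; with a common efficiency $\bar{\eta}_{n}$, condition (iii) converts cost preservation into throughput preservation, $\sum_{t\in\bar{\mathcal{S}}_{n}}\phi_{n}(t)=\bar{\eta}_{n}\sum_{t\in\bar{\mathcal{S}}_{n}}(p_{n}(t)+\lambda)l_{n}(t)=\sum_{t\in\bar{\mathcal{S}}_{n}}\phi_{n}^{*}(t)$.

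This is not a pedantic point, because the proposition is later invoked in a setting where throughput feasibility is exactly what is unknown: the rounding problem $\mathscr{P}5$ and Algorithm \ref{alg:Rounding-Algorithm} only enforce the resource constraint $\bm{l}(t)\in\mathcal{L}$ together with conditions (ii)--(iii), not constraint (\ref{eq:thp_c_v2}), and the asymptotic-optimality claim in Proposition \ref{prop:alg-opt} relies on the rounded solution being optimal (hence feasible) for $\mathscr{P}2$. Under your reading the proposition becomes a near-tautology that cannot support that use without a separate verification of the throughput constraint, which is precisely the constant-efficiency argument you omitted. To repair your proof, keep your decomposition of (\ref{eq:erg_relax}) as is, but add the exchange argument: if two slots in $\bar{\mathcal{S}}_{n}(\bm{L}^{*})$ had different efficiencies, shifting resource from the less to the more efficient slot while holding throughput fixed would strictly lower $\tilde{F}$, contradicting optimality of $(\bm{P}^{*},\bm{L}^{*})$; then use the common efficiency to show $\Upsilon_{n}$ is unchanged, which together with your cost equality yields optimality of $(\bm{P},\bm{L})$.
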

\begin{proof}
See Appendix \ref{sec:proof-lem-opt_preserve_c}.
\end{proof}

\begin{figure}
\begin{centering}
\includegraphics{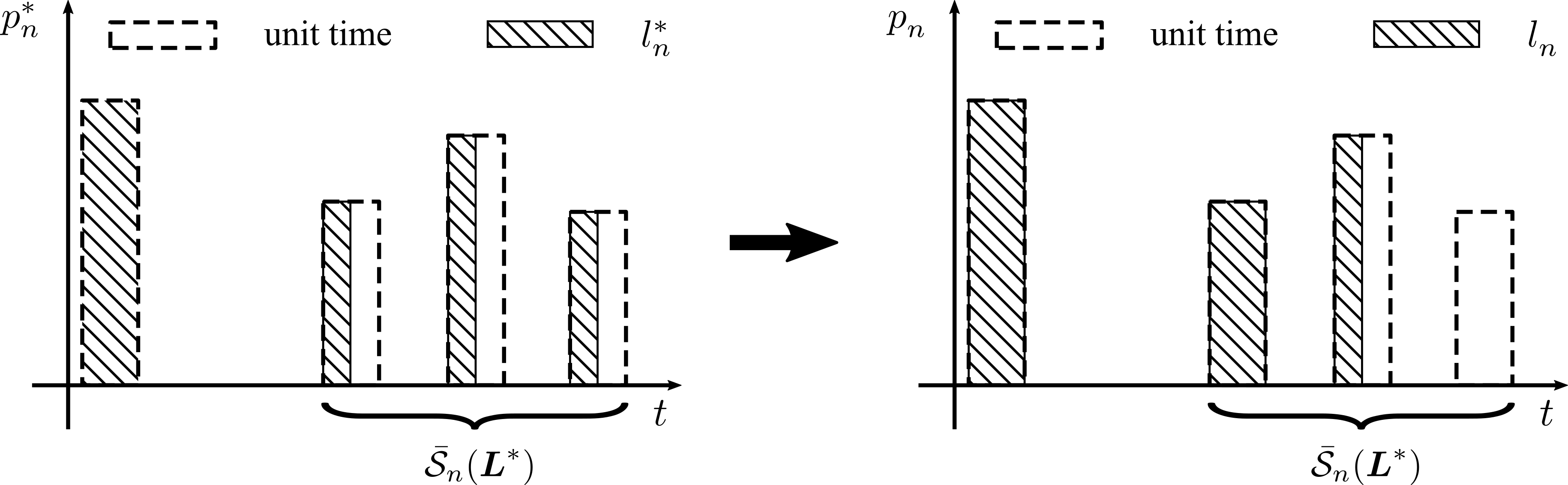}
\par\end{centering}
\caption{\label{fig:ill_opt_pre_con}The equivalence of multiple solutions
under the optimality preservation conditions for receiver $n$.}
\end{figure}

As a result, there are multiple optimal solutions to $\mathscr{P}2$
if the cardinality of $\bar{\mathcal{S}}_{n}(\boldsymbol{L}^{*})$
is greater than 1. Fig. \ref{fig:ill_opt_pre_con} illustrates the
equivalence of multiple solutions under the conditions in Proposition
\ref{prop:opt_preserve_c}. The solutions in the two subfigures of
Fig. \ref{fig:ill_opt_pre_con} have identical costs in (\ref{eq:erg_relax}),
because the power allocation in the right subfigure is to concentrate
the two half slots with the same power allocation in the left subfigure
into one full slot. In addition, both solutions have identical throughput
according to Proposition \ref{prop:opt_preserve_c}, because the cost
in (\ref{eq:erg_relax}) consumed by the slots in set $\bar{\mathcal{S}}_{n}(\boldsymbol{L}^{*})$
of the two solutions are identical. However, although the two solutions
yield the same cost in (\ref{eq:erg_relax}), the solution on the
right is more desired because it uses only 3 time slots as opposed
to 4 time slots used by the solution on the left.

\subsection{Cost-aware rounding algorithm}

To circumvent the issue from the non-unique optimal solutions, we
develop an easy-to-implement strategy to find a better solution. At
first, we prove that the cost gap is due to the number of partially
used slots.
\begin{prop}
\label{prop:perf_cap_non_r}(Cost upper bound) Denote $F^{*}$ as
the minimum cost of $\mathscr{P}1$. For any optimal solution $(\bm{P}^{*},\bm{L}^{*})$
to $\mathscr{P}2$, the cost in (\ref{eq:erg_def}) satisfies $F(\bm{P}^{*},\bm{L}^{*})-F^{*}\le|\bar{\mathcal{S}}(\boldsymbol{L}^{*})|\lambda$.
\end{prop}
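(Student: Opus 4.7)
The plan is to chain two inequalities: first control $F^{*}$ from below by the relaxed optimum $\tilde{F}(\bm{P}^{*},\bm{L}^{*})$, and then control the gap $F(\bm{P}^{*},\bm{L}^{*})-\tilde{F}(\bm{P}^{*},\bm{L}^{*})$ from above by examining the indicator versus the sum $\sum_{n}l_{n}(t)$ slot by slot.

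First, I would invoke the observation already made in the paper (just after the introduction of $\mathscr{P}2$) that $\tilde{F}(\bm{P},\bm{L})\le F(\bm{P},\bm{L})$ pointwise, together with the fact that $\mathscr{P}1$ and $\mathscr{P}2$ have identical feasible sets (\ref{eq:thp_c_v2})\textendash (\ref{eq:resources_constraint_v2}). Hence the optimal value of $\mathscr{P}2$ is a lower bound on $F^{*}$, and in particular $\tilde{F}(\bm{P}^{*},\bm{L}^{*})\le F^{*}$ for any optimizer $(\bm{P}^{*},\bm{L}^{*})$ of $\mathscr{P}2$.

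Next I would compute the gap $F(\bm{P}^{*},\bm{L}^{*})-\tilde{F}(\bm{P}^{*},\bm{L}^{*})$ directly from the definitions (\ref{eq:erg_def}) and (\ref{eq:erg_relax}). The power terms $\sum_{t,n}p_{n}^{*}(t)l_{n}^{*}(t)$ cancel, leaving
\[
F(\bm{P}^{*},\bm{L}^{*})-\tilde{F}(\bm{P}^{*},\bm{L}^{*})=\lambda\sum_{t\in\mathcal{T}}\Bigl(\mathbb{I}\bigl\{\textstyle\sum_{n}l_{n}^{*}(t)>0\bigr\}-\sum_{n}l_{n}^{*}(t)\Bigr).
\]
For each slot $t$, the summand is $0$ when $\sum_{n}l_{n}^{*}(t)\in\{0,1\}$ and equals $1-\sum_{n}l_{n}^{*}(t)\in(0,1)$ exactly when $t\in\bar{\mathcal{S}}(\boldsymbol{L}^{*})$. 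Therefore the gap is bounded by $\lambda\,|\bar{\mathcal{S}}(\boldsymbol{L}^{*})|$ term by term.

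Combining the two bounds yields
\[
F(\bm{P}^{*},\bm{L}^{*})-F^{*}\le F(\bm{P}^{*},\bm{L}^{*})-\tilde{F}(\bm{P}^{*},\bm{L}^{*})\le\lambda\,|\bar{\mathcal{S}}(\boldsymbol{L}^{*})|,
\]
which is the claim. There is no real obstacle here: the statement is essentially a bookkeeping consequence of the relaxation (\ref{eq:erg_relax}) together with the already-established sandwich $\tilde{F}\le F$, and the only subtlety is verifying that the feasible sets of $\mathscr{P}1$ and $\mathscr{P}2$ coincide so that $\tilde{F}(\bm{P}^{*},\bm{L}^{*})\le F^{*}$ really holds. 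The proposition makes precise the informal intuition from the discussion of Fig.~\ref{fig:ill_opt_pre_con}: the only source of suboptimality that the relaxation introduces is the inability to distinguish between a fully used slot and a collection of partially used slots with the same aggregate resource, and each partially used slot incurs at most $\lambda$ of extra indicator cost.
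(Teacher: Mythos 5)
Your proposal is correct and follows essentially the same route as the paper's proof: bound $F^{*}$ from below by $\tilde{F}(\bm{P}^{*},\bm{L}^{*})$ using the identical feasible sets and $\tilde{F}\le F$, then observe that the power terms cancel so the gap reduces to $\lambda\sum_{t}\bigl(\mathbb{I}\{\sum_{n}l_{n}^{*}(t)>0\}-\sum_{n}l_{n}^{*}(t)\bigr)$, which vanishes outside $\bar{\mathcal{S}}(\boldsymbol{L}^{*})$ and is at most $1$ on each partially used slot. The only difference is cosmetic: you argue slot by slot where the paper splits the sum over $\bar{\mathcal{S}}(\boldsymbol{L}^{*})$ and its complement.
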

\begin{proof}
See Appendix \ref{sec:proof-prop-perf-cap-non-r}.
\end{proof}
As a result, the performance gap to the optimum of $\mathscr{P}1$
can be decreased by reducing the number of the partially-used slot
sets, $|\bar{\mathcal{S}}(\boldsymbol{L})|$.

Based on Proposition \ref{prop:opt_preserve_c}, one can derive a
set of optimal solutions $\{(\bm{P},\bm{L})\}$ based on a particular
solution $(\bm{P}^{*},\bm{L}^{*})$ to $\mathscr{P}2$. In addition,
we are interested in the solution that minimizes $|\bar{\mathcal{S}}(\boldsymbol{L})|$
because of Proposition \ref{prop:perf_cap_non_r}. Given a particular
solution $(\bm{P}^{*},\bm{L}^{*})$ to $\mathscr{P}2$, compute the
parameter $\hat{\mathcal{S}}_{n}=\bar{\mathcal{S}}_{n}(\boldsymbol{L}^{*})$
and $\hat{F}_{n}=\sum_{t\in\hat{\mathcal{S}}_{n}}(p_{n}^{*}(t)+\lambda)l_{n}^{*}(t).$
An equivalent solution to $\mathscr{P}2$ with a minimum $|\bar{\mathcal{S}}(\boldsymbol{L})|$
can be found by solving the following problem:
\begin{align}
\mathscr{P}5:\quad\underset{\bm{L}}{\text{minimize}} & \quad\left|\bar{\mathcal{S}}\left(\boldsymbol{L}\right)\right|\nonumber \\
\text{subject to} & \quad\boldsymbol{l}\left(t\right)\in\mathcal{L},\forall t\in\hat{\mathcal{S}}_{n}\label{eq:car_min_c1}\\
 & \quad l_{n}\left(t\right)=l_{n}^{*}\left(t\right),\forall t\notin\hat{\mathcal{S}}_{n},\forall n\label{eq:car_min_c2}\\
 & \quad\sum_{t\in\hat{\mathcal{S}}_{n}}\left(p_{n}^{*}\left(t\right)+\lambda\right)l_{n}\left(t\right)=\hat{F}_{n},\forall n\label{eq:car_min_c3}
\end{align}
and $\bm{P}=\bm{P}^{*}$. It follows that $p_{n}(t)=p_{n}^{*}(t)$
meets the first condition in Proposition \ref{prop:opt_preserve_c},
and constraint (\ref{eq:car_min_c2}) and constraint (\ref{eq:car_min_c3})
correspond to the last two conditions in Proposition \ref{prop:opt_preserve_c},
respectively. Condition (\ref{eq:car_min_c1}) ensures the solution
is feasible to $\mathscr{P}2$. Thus, the optimal solution to $\mathscr{P}5$
is also an optimal solution to $\mathscr{P}2$ according to Proposition
\ref{prop:opt_preserve_c}.

Problem$\mathscr{P}5$ can be solved by an sequential approach by
decomposing it into $N$ subproblems by partially optimizing variable
$l_{n}(t)$. Specially, given $l_{m}^{*}(t)$, $m\neq n$, the constraint
$l_{n}(t)\in\mathcal{L}$ in $\mathscr{P}5$ becomes $0\leq l_{n}(t)\leq1-\sum_{m\neq n}l_{m}^{*}(t)$,
and the objective becomes $|\bar{\mathcal{S}}_{n}(\boldsymbol{L})|$.
Then the $n$th subproblem of $\mathscr{P}5$ degenerates to minimizing
the cardinality of the set $\bar{\mathcal{S}}_{n}\left(\boldsymbol{L}\right)$
with modified constraint $0\leq l_{n}(t)\leq1-\sum_{m\neq n}l_{m}^{*}(t)$.

For simplification, we use a bisection searching method to approximately
solve the subproblem, that is, updating $l_{n}\left(t\right)$ for
$t\in\hat{\mathcal{S}}_{n}$ while leaving $l_{n}\left(t\right)$
for $t\notin\hat{\mathcal{S}}_{n}$ unchanged, as shown in Algorithm
\ref{alg:Rounding-Algorithm}. By arbitrarily fixing the order of
elements in the set $\bar{\mathcal{S}}_{n}(\boldsymbol{L})$, as $\{q_{1},\cdots q_{K}\}$,
where $K=|\hat{\mathcal{S}}_{n}|$, we opt for the first $k$ slots
in $\hat{\mathcal{S}}_{n}$ for full usage, and allocate the $(k+1$)th
slot according to the constraint (\ref{eq:car_min_c3}), while leaving
the remaining $K-k-1$ slots unused. In other words,
\begin{equation}
l_{n}\left(t\right)=\begin{cases}
1-\sum_{m\neq n}l_{m}^{*}(t) & t\in\{q_{1},\cdots q_{k}\}\\
\tilde{l}_{n}\left(t\right) & t=q_{k+1}\\
0 & t\in\{q_{k+1},\cdots q_{K}\}
\end{cases}\label{eq:allocation_lnt}
\end{equation}
where $k\in\mathbb{Z}^{+}$ and $\tilde{l}_{n}(t)\in[0,1-\sum_{m\neq n}l_{m}^{*}(t))$
is chosen to satisfy condition (\ref{eq:car_min_c3}), that is the
solution of the following equation
\begin{align}
 & \sum_{n\in\{q_{1},\cdots,q_{k}\}}\left(p_{n}^{*}\left(t\right)+\lambda\right)\left(1-\sum_{m\neq n}l_{m}^{*}\left(t\right)\right)\nonumber \\
 & \quad\quad\quad\quad\quad+\left[\left(p_{n}^{*}\left(t\right)+\lambda\right)\tilde{l}_{n}\left(t\right)\right]_{n=q_{k+1}}=\hat{F}_{n}\label{eq:bi-ss-rounding}
\end{align}
which can be solved by bisection searching method because the summation
function is monotonically increasing over $k$.

\begin{algorithm}
\# Initialization process: set $n\leftarrow0$.
\begin{enumerate}
\item[1)] Calculate $\hat{\mathcal{S}}_{n}$ and $\hat{F}_{n}$ based on $(\bm{P}^{*},\bm{L}^{*})$;
\item[2)] Arbitrarily fixing the order of elements in the set $\hat{\mathcal{S}}_{n}$,
and bisection search $k$ and $\tilde{l}_{n}(t)$ according to (\ref{eq:bi-ss-rounding});
\item[3)] Update $l_{n}^{*}(t)$ for $t\in\hat{\mathcal{S}}_{n}$ according
to (\ref{eq:allocation_lnt}), set $n\leftarrow n+1$, and go to step
1 until $n>N$.
\end{enumerate}
\# Output $\hat{\boldsymbol{L}}=\boldsymbol{L}^{*}$.

\caption{\label{alg:Rounding-Algorithm}Cost-ware Rounding Algorithm}
\end{algorithm}

\subsubsection{Complexity analysis}

The Algorithm \ref{alg:Rounding-Algorithm} consists of $N$ rounds
of bisection searching for $k$ and $\tilde{l}_{n}(t)$ with $\mathcal{O}(T\log(T))$
computational complexity, consisted of $\mathcal{O}(\log(T))$ steps
for the bisection search of $k^{*}$ and $\tilde{l}_{n}^{*}(t)$,
where each step requires $\mathcal{O}(T)$ for calculating $\sum_{n\in\{q_{1},\cdots,q_{k}\}}(p_{n}^{*}(t)+\lambda)(1-\sum_{m\neq n}l_{m}^{*}(t))$.
As a result, the computational complexity of the Algorithm \ref{alg:Rounding-Algorithm}
is $\mathcal{O}(NT\log(T))$.

\subsubsection{Performance analysis}

Denote the slot duration as $\delta$, and the transmission duration
as $\Gamma$, and consequently, we have $T=\Gamma/\delta$. We prove
that the cost gap reduces to $N\delta\lambda$ by cost-aware rounding
algorithm.
\begin{prop}
(Asymptotic optimality\label{prop:alg-opt}) Let $F^{*}$ be the minimum
cost of $\mathscr{P}1$ and $\hat{\boldsymbol{L}}$ be the output
of Algorithm \ref{alg:Rounding-Algorithm}. Then, $F(\boldsymbol{P}^{*},\hat{\boldsymbol{L}})-F^{*}\le N\delta\lambda$
, {\em i.e.}, $F(\boldsymbol{P}^{*},\hat{\boldsymbol{L}})\to F^{*}$
as $\delta\to0$.
\end{prop}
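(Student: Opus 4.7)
The plan is to derive the bound by combining Proposition \ref{prop:opt_preserve_c} and Proposition \ref{prop:perf_cap_non_r} with a structural analysis of the output of Algorithm \ref{alg:Rounding-Algorithm}. The main engine is to show that $(\boldsymbol{P}^{*},\hat{\boldsymbol{L}})$ remains optimal for $\mathscr{P}2$ and that $|\bar{\mathcal{S}}(\hat{\boldsymbol{L}})|\le N$. Optimality for $\mathscr{P}2$ is obtained by checking the three conditions of Proposition \ref{prop:opt_preserve_c}: condition (i) is immediate since $\boldsymbol{P}$ is never modified; condition (ii) is immediate because step 3 of Algorithm \ref{alg:Rounding-Algorithm} only overwrites $l_n(t)$ for $t\in\hat{\mathcal{S}}_n$; condition (iii) is precisely the equation (\ref{eq:bi-ss-rounding}) used to determine $k$ and $\tilde{l}_n(t)$.

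The key step is bounding $|\bar{\mathcal{S}}(\hat{\boldsymbol{L}})|\le N$, which I would obtain by showing the per-receiver bound $|\bar{\mathcal{S}}_n(\hat{\boldsymbol{L}})|\le 1$. Letting $\boldsymbol{L}^{(n)}$ denote the iterate after round $n$, I would first establish the invariant that if $\sum_j l_j^{(n)}(t)=1$ then $\sum_j l_j^{(n')}(t)=1$ for every $n'>n$: indeed, a later round $n'$ only changes $l_{n'}(t)$ on $\hat{\mathcal{S}}_{n'}^{(n'-1)}\subseteq\bar{\mathcal{S}}(\boldsymbol{L}^{(n'-1)})$, and a fully-used slot lies outside $\bar{\mathcal{S}}$. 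With this invariant, any $t$ with $\hat{l}_n(t)>0$ falls into one of three cases: (a) $t\notin\hat{\mathcal{S}}_n^{(n-1)}$, forcing $\sum_j l_j^{(n-1)}(t)=1$ which persists; (b) $t\in\hat{\mathcal{S}}_n^{(n-1)}$ with $\hat{l}_n(t)=1-\sum_{m\neq n}l_m^{(n-1)}(t)$, yielding $\sum_j l_j^{(n)}(t)=1$ which persists; or (c) $t\in\hat{\mathcal{S}}_n^{(n-1)}$ with $\hat{l}_n(t)=\tilde{l}_n(t)$, which by construction (\ref{eq:allocation_lnt}) occurs for at most one index. Cases (a) and (b) lie outside $\bar{\mathcal{S}}(\hat{\boldsymbol{L}})$, leaving $|\bar{\mathcal{S}}_n(\hat{\boldsymbol{L}})|\le 1$. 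Since every $t\in\bar{\mathcal{S}}(\hat{\boldsymbol{L}})$ has some $\hat{l}_m(t)>0$ and hence lies in $\bar{\mathcal{S}}_m(\hat{\boldsymbol{L}})$, a union bound delivers $|\bar{\mathcal{S}}(\hat{\boldsymbol{L}})|\le\sum_n|\bar{\mathcal{S}}_n(\hat{\boldsymbol{L}})|\le N$.

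With these pieces in place, Proposition \ref{prop:perf_cap_non_r} applied to the $\mathscr{P}2$-optimal solution $(\boldsymbol{P}^{*},\hat{\boldsymbol{L}})$ gives $F(\boldsymbol{P}^{*},\hat{\boldsymbol{L}})-F^{*}\le|\bar{\mathcal{S}}(\hat{\boldsymbol{L}})|\lambda\le N\lambda$. The stated bound $N\delta\lambda$ and the limit $F(\boldsymbol{P}^{*},\hat{\boldsymbol{L}})\to F^{*}$ as $\delta\to 0$ then follow from the scaling convention that interprets the indicator coefficient in (\ref{eq:erg_def}) as the product of the slot duration $\delta$ and a per-unit-time weight $\lambda$, so that the aggregate time penalty is measured against the fixed horizon $\Gamma=T\delta$.

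The main obstacle is precisely the bookkeeping behind the invariant in the second paragraph. Because $\hat{\mathcal{S}}_n$ is recomputed from the partially updated $\boldsymbol{L}$ at each round, one must rule out the possibility that a slot which becomes partial in round $n$ (case (c)) is later ``split'' by some round $n'>n$ into additional partial slots attributable to receiver $n$. The stability-of-fully-used-slots lemma is the cleanest way to close this loophole, and it hinges on the fact that Algorithm \ref{alg:Rounding-Algorithm} never acts outside $\bar{\mathcal{S}}$.
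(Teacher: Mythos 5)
Your proposal is correct and takes essentially the same route as the paper's proof: show that the rounds of Algorithm \ref{alg:Rounding-Algorithm} preserve optimality for $\mathscr{P}2$ via Proposition \ref{prop:opt_preserve_c}, bound $|\bar{\mathcal{S}}(\hat{\boldsymbol{L}})|\le N$ (the paper organizes this as a telescoping chain $|\bar{\mathcal{S}}(\boldsymbol{L}^{(N)})|\le\sum_{n}|\bar{\mathcal{S}}_{n}(\boldsymbol{L}^{(n)})|\le N$ rather than your per-receiver bound $|\bar{\mathcal{S}}_{n}(\hat{\boldsymbol{L}})|\le1$ via the fully-used-slots-persist invariant, but both rest on the same facts that each round touches only slots in $\bar{\mathcal{S}}$ and leaves at most one new partial slot), and then invoke Proposition \ref{prop:perf_cap_non_r}. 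On the final step, the paper also obtains the bound with coefficient $\lambda$ per partial slot and states $N\delta\lambda$ directly, so your explicit remark that the slot penalty is interpreted as $\delta\lambda$ under the time-rescaling convention is consistent with (indeed slightly more careful than) the paper.
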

\begin{proof}
See Appendix \ref{sec:proof_gap_upper_bound}.
\end{proof}
It is shown from Proposition \ref{prop:alg-opt} that the proposed
rounding algorithm yields a solution that is asymptotically optimal
for small slot duration $\delta$.

\begin{figure}
\begin{centering}
\includegraphics[width=1\columnwidth]{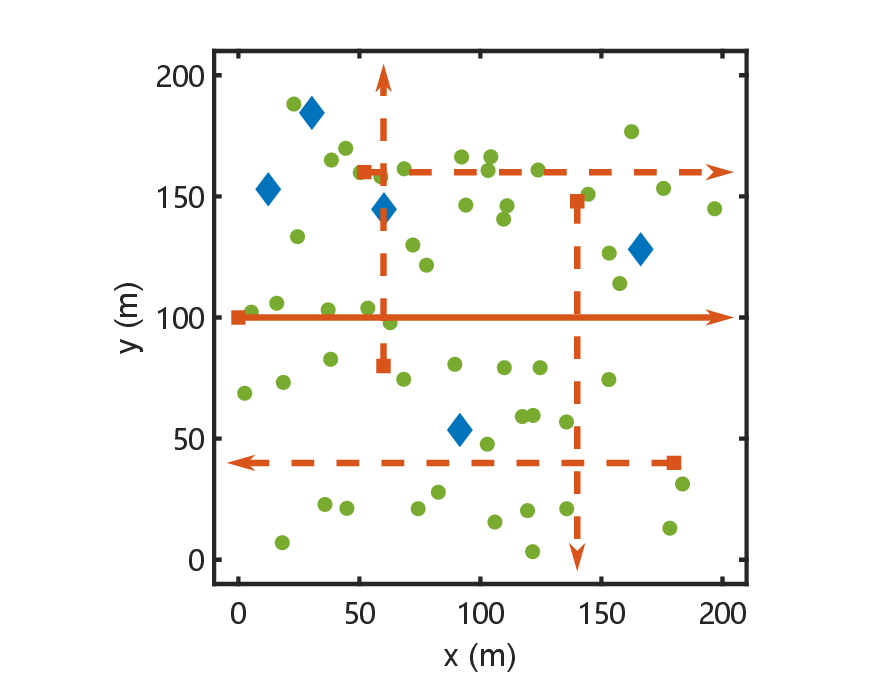}
\par\end{centering}
\caption{\label{fig:uav_layout}Illustration for \acpl{bs} positions (blue
diamonds), user positions (green circles), and \ac{uav} routes (red
lines with directions), where the solid line represents the route
of transmitting \ac{uav} and the dashed lines represent the route
of receiving \ac{uav}.}
\end{figure}

\section{Simulation\label{sec:Simulation}}

Consider a $200\times200\text{ m\ensuremath{^{2}}}$ area, where $5$
\acpl{bs}, acting as the nodes with known locations, and $N_{\text{ue}}=100$
ground users, acting as the nodes with unknown locations, are randomly
distributed, one \ac{uav}, serving as the transmitting node, navigates
along the $y=100$ route, and four \acpl{uav}, two flying horizontally
and two vertically, acting as the receiving nodes, traverse this area,
as shown in Fig. \ref{fig:uav_layout}. The height of users, \acpl{bs},
vertical-route \acpl{uav}, and horizontal-route \acpl{uav} are set
0 m, 10 m, 95 m, and 100 m. The speeds of transmitting and receiving
\acpl{uav} are 5 m/s and 3 m/s, respectively.

The channels are realized by $h_{j}=g_{j}\xi_{j}$ according to (\ref{eq:channel_model}),
where $g_{j}$ includes path loss and shadowing $[-g_{j}]_{dB}=\text{PL}_{j}+\chi$,
and $\xi_{j}$ follows a Gamma distribution $\text{G}(\kappa_{j},1/\kappa_{j})$.
The shape parameter $\kappa_{j}$ of is chosen randomly from $1$
to $30$. The shadowing is modeled by a log-normal distribution, where
$\chi$ follows zero mean and $8$ variance with correlated distance
$5$ m. The propagation for \ac{los} and \ac{nlos} links is taken
from the 3GPP \ac{umi} model in \cite{TR36814} as
\[
\text{PL}_{j}=\begin{cases}
22.0+28.0\log_{10}(d_{j})+20\log_{10}(f_{c}), & \text{LOS link}\\
22.7+36.7\log_{10}(d_{j})+26\log_{10}(f_{c}) & \text{NLOS link}
\end{cases}
\]
where $d_{j}$ is the distance between the transmitting node and the
receiving or interfered nodes, and $f_{c}=3$ Ghz represents the carrier
frequency.

Since there is low blockage probability between \acpl{uav}, we assume
the channel between transmitting node to receiving nodes is always
in \ac{los}, {\em i.e.}, $\text{PL}_{j}=\text{PL}_{\text{LOS}}$,
and there is no shadowing between them, {\em i.e.}, $\chi=0$.

The blockage status between \ac{uav} and ground nodes is simulated
using the following steps. Initially, we employ the \ac{los} probability
model \cite{AlhKanLar:J14,MozSadBen:J17} 
\begin{equation}
\mathbb{P}\left(\text{LOS},\theta\right)=\frac{1}{1+a\exp\left(-b\left[\theta-a\right]\right)}\label{eq:def_p_los}
\end{equation}
to generate \ac{los} probability according to the elevation angle
$\theta=\sin^{-1}(u_{j}/d_{j})\times180/\pi$, where $u_{j}$ represents
their relative height, and the parameters are set $a=11.95$ and $b=0.14$
\cite{MozSadBen:J17}. Subsequently, the blockage status is determined
based on a threshold $p_{\text{th}}=0.5$. The channel is considered
blocked if $\mathbb{\mathbb{P}}\left(\text{LOS},\theta\right)<p_{\text{th}}$,
otherwise, the channel is in \ac{los}.

\begin{figure}
\begin{centering}
\includegraphics[width=1\columnwidth]{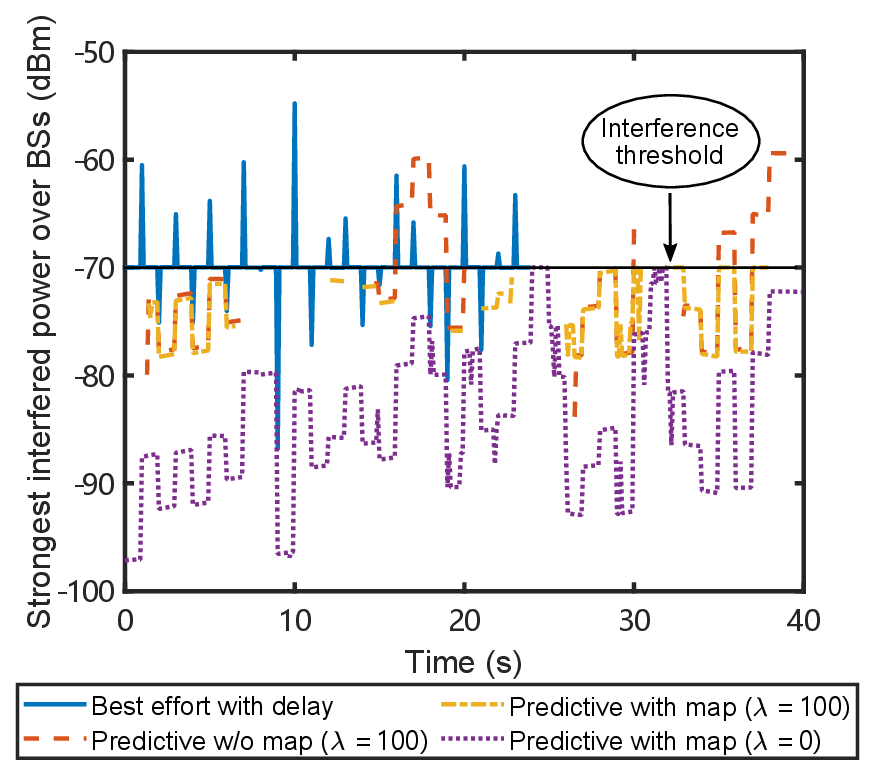}
\par\end{centering}
\caption{\label{fig:itf_leakage_BSs}Strongest interference power received
among the 5 \acpl{bs} at each time slot over a duration of 40 seconds.}
\end{figure}

\begin{figure}
\begin{centering}
\includegraphics[width=1\columnwidth]{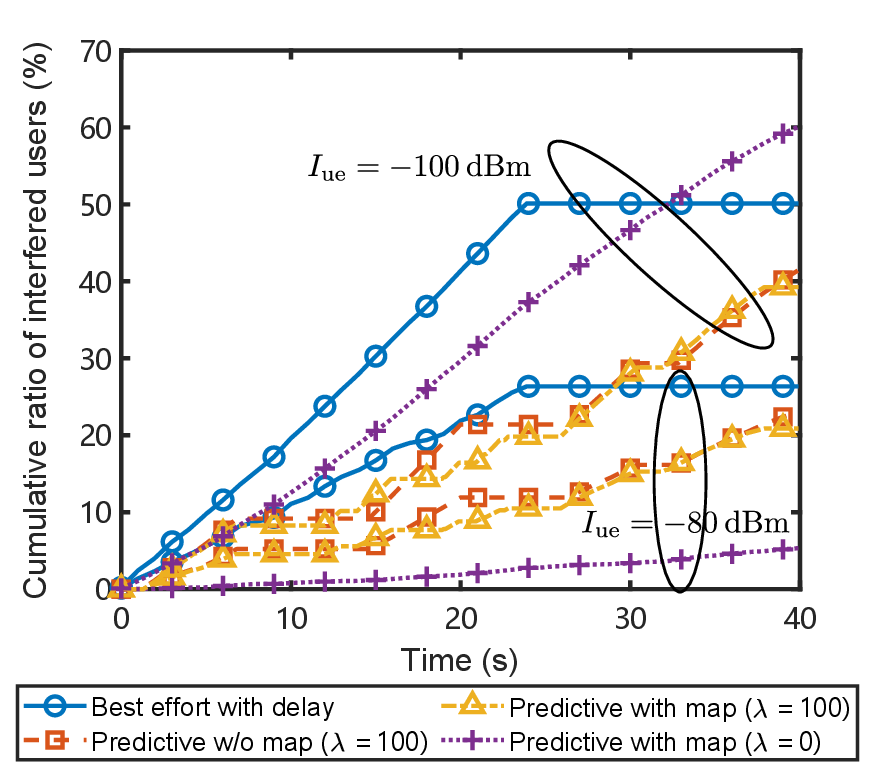}
\par\end{centering}
\caption{\label{fig:itf_leakage_users}Cumulative radio of interfered users
at each time slot determined based on different interference thresholds
$I_{\text{ue}}$.}
\end{figure}

We evaluate the following baseline schemes

\subsubsection{Best effort with delayed channel information (Best effort with delay)}

This scheme optimizes the transmission power based solely on the current
instantaneous channels and interference constraints, employing a best-effort
transmission strategy. Specifically, for each upcoming time slot,
it maximizes the throughput subject to the power and interference
constraints, based on the current instantaneous channel conditions.

\subsubsection{Predictive communication without a radio map (Predictive w/o map)}

When a radio map is not available, the probabilistic \ac{los} model
(\ref{eq:def_p_los}) is used for constructing an average channel
gain model as $\bar{g}_{j}=\mathbb{P}(\text{LOS},\theta)\cdot\text{PL}_{\text{LOS}}+(1-\mathbb{P}(\text{LOS},\theta))\cdot\text{PL}_{\text{NLOS}}$.
The predicted channel capacity is approximated by $\bar{c}_{n}\triangleq\text{\ensuremath{\log\left(1+\beta p_{n}g_{n}\right)}}$,
where $\beta=0.5$ to back-off for the small-scale fading. The scheme
then solves $\mathscr{P}1$ based on the predicted average channel
gain $\bar{g}_{j}$ and the predicted channel capacity $\bar{c}_{n}$.

\subsubsection{Proposed predictive optimization with radio maps (Predictive with
map)}

The proposed scheme first calculates the attenuation $\epsilon_{n}(t)$
according to Lemma \ref{lem:Lower_bound_of_thp_exp} and the interference-equivalent
power upper bound $\bar{p}(t)$ according to (\ref{eq:thp_c_v2}),
based on radio maps $\bm{\Theta}$. It then obtains an optimal solution
to $\mathscr{P}2$ using Algorithm \ref{alg:alternative_opt}. Finally,
the transmission strategy plan is obtained through rounding the solution
to $\mathscr{P}2$ based on Algorithm \ref{alg:Rounding-Algorithm}.

\subsection{Interference leakage}

We first evaluate the impact of interference leakage on \acpl{bs}
with the a required data size $S_{n}=500\text{ Mbits}$ and interference
threshold $I_{\text{bs}}=-70$ dBm. Fig.~\ref{fig:itf_leakage_BSs}
shows the strongest interference power received among the 5 \acpl{bs}
at each time slot over a duration of $T=40$ seconds. The results
demonstrate that the proposed predictive optimization, which explores
radio maps, can prevent all unintended receivers with known positions
from experiencing interference (beyond the interference threshold).
In contrast, when a radio map is not available, the baseline schemes
cannot guarantee the interference constraints to be satisfied.

Subsequently, we evaluate the impact of interference leakage on users
at unknown locations, as depicted in Fig.~\ref{fig:itf_leakage_users}.
For each time slot, a user is deemed to be interfered if the received
interference power exceeds the interference threshold of users $I_{\text{ue}}$.
Fig.~\ref{fig:itf_leakage_users} shows the cumulative ratio of interfered
users at each time slot over a period of $T=40$ seconds, where the
ratio of interfered users at time $t$ is computed as $r\left(t\right)=\sum_{q=1}^{N_{\text{ue}}}\mathbb{E}\left\{ \mathbb{I}\left\{ \boldsymbol{p}\left(t\right)\boldsymbol{l}^{H}\left(t\right)h_{q}\left(t\right)>I_{\text{ue}}\right\} \right\} /\left(N_{\text{ue}}T\right)$,
where $h_{q}\left(t\right)$ is the channel gain between transmitting
node to user $q$. It is observed that the cumulative ratio of interfered
users for the best effort scheme over the 40 seconds, statistically
roughly $50\%$ of users experience at least $-100\text{ dBm}$ interference
and $20\%$ of users experience at least $-80\text{ dBm}$ interference
for the entire period. For the proposed scheme with radio maps, these
ratios reduce to $40\%$ (at $\lambda=100$) and $5\%$ (at $\lambda=0$),
respectively, achieving a gain of 10\%\textendash 15\%. This is because
the proposed scheme can reduce interference power by employing radio
maps to concentrate the power on the good channels, and reduce the
interference time by introducing the weighting factor to concentrate
the transmission in fewer slots. The property of short transmission
duration is also observed in Fig.~\ref{fig:itf_leakage_BSs} for
the map-assisted scheme under $\lambda=100$.

Fig.~\ref{fig:itf_leakage_users_S} evaluates the impact of interference
leakage over the delivery of different data volume $S_{n}$, where
the average ratio of interfered users is computed as $\sum_{t\in\mathcal{T}}r\left(t\right)$.
On average, the number of users affected by interference is reduced
by over $10\%$, even in the case without a radio map. With the employment
of radio maps, this figure is further reduced by more than $14\%$
(at $I_{\text{ue}}=-100\text{ dBm}$) and $70\%$ (at $I_{\text{ue}}=-80\text{ dBm}$),
leading to fewer users being interfered with even without knowing
their locations or channels. The superiority is significant when handling
medium-sized data volumes, for example, when $S_{n}=400\text{ Mbits}$,
the proposed predictive scheme with radio maps outperforms the best
effort scheme by $23\%$ (at $I_{\text{ue}}=-100\text{ dBm}$) and
$90\%$ (at $I_{\text{ue}}=-80\text{ dBm}$).
\begin{figure}
\begin{centering}
\includegraphics[width=1\columnwidth]{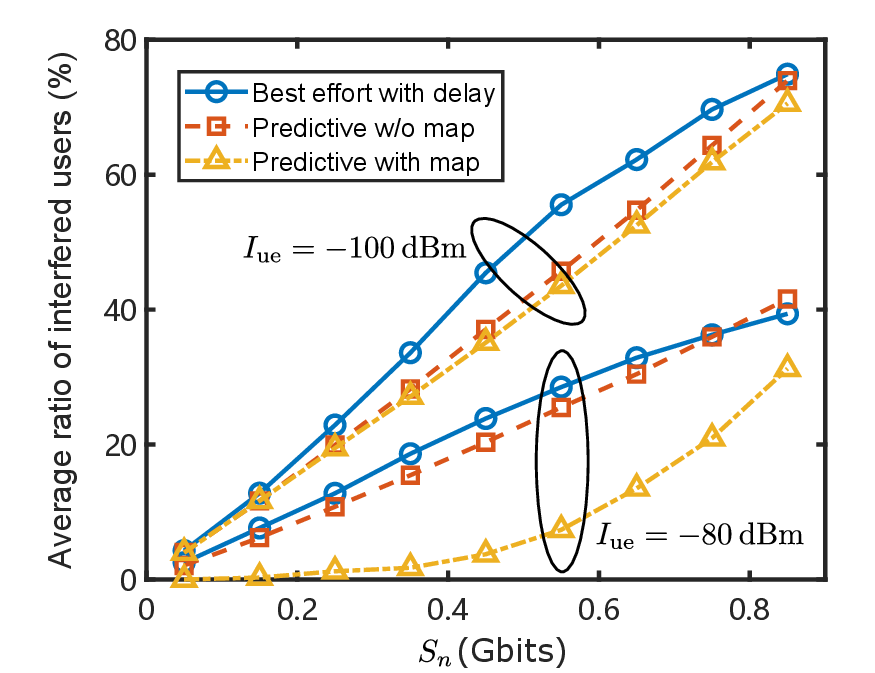}
\par\end{centering}
\caption{\label{fig:itf_leakage_users_S}Average ratio of interfered users
over different required data size determined by different interference
threshold $I_{\text{ue}}$.}
\end{figure}

\subsection{Convergence and complexity}

We evaluate the following algorithms that also consider the universal
interference and exploit the radio map, but use different strategies
to plan resources.

\subsubsection{LR \& SCA}

This algorithm solves the non-continuous non-convex problem $\mathscr{P}1$
through a four-step process. (i) Continuity Transformation: A new
parameter $a(t)\triangleq\sum_{n\in\mathcal{N}}l_{n}(t)$ is introduced,
replacing the indicator function $\mathbb{I}\{\sum_{n\in\mathcal{N}}l_{n}(t)>0\}$,
with constraints $a(t)\in[0,1]$ and $a(t)(1-a(t))\le0$; (ii) \ac{lr}
\cite{Fis:J81}: The non-convex constraint $a(t)(1-a(t))\le0$ is
incorporated into the objective function with a penalty factor; (iii)
\ac{sca} \cite{Raz:T14}: The local problem is approximated to a
convex problem using Taylor\textquoteright s approximation; (iv) CVX
solver \cite{MicSte:14}: The local convex problem is solved using
the CVX tool.

In addition, we evaluate the cost of the scheme without rounding,
{\em i.e.}, applying the solution to Algorithm \ref{alg:alternative_opt}
for transmission planning, the cost lower bound, {\em i.e.}, the
optimal value $\tilde{F}^{*}$ of the relaxed problem $\mathscr{P}2$,
and the theoretical largest performance gap of the proposed relax-then-round
strategy according to Proposition \ref{prop:perf_cap_non_r}, {\em i.e.},
$\tilde{F}^{*}+N\delta\lambda$.

Fig. \ref{fig:Cost_T} depicts the cost with the slot duration $\delta$.
It is observed that the cost of the proposed scheme with cost-aware
rounding (Algorithm \ref{alg:alternative_opt} \& \ref{alg:Rounding-Algorithm})
consistently remains below the theoretical upper bound (Proposition
\ref{prop:perf_cap_non_r}), and both converge towards the lower bound
as the slot duration $\delta$ diminishes. This validates the performance
gap solution in Proposition \ref{prop:perf_cap_non_r} and aligns
with the asymptotic optimality of the proposed scheme. In comparison,
a significant gap to the optimum is noted in the absence of a rounding
process (Algorithm \ref{alg:alternative_opt} only).

\begin{figure}
\begin{centering}
\includegraphics[width=1\columnwidth]{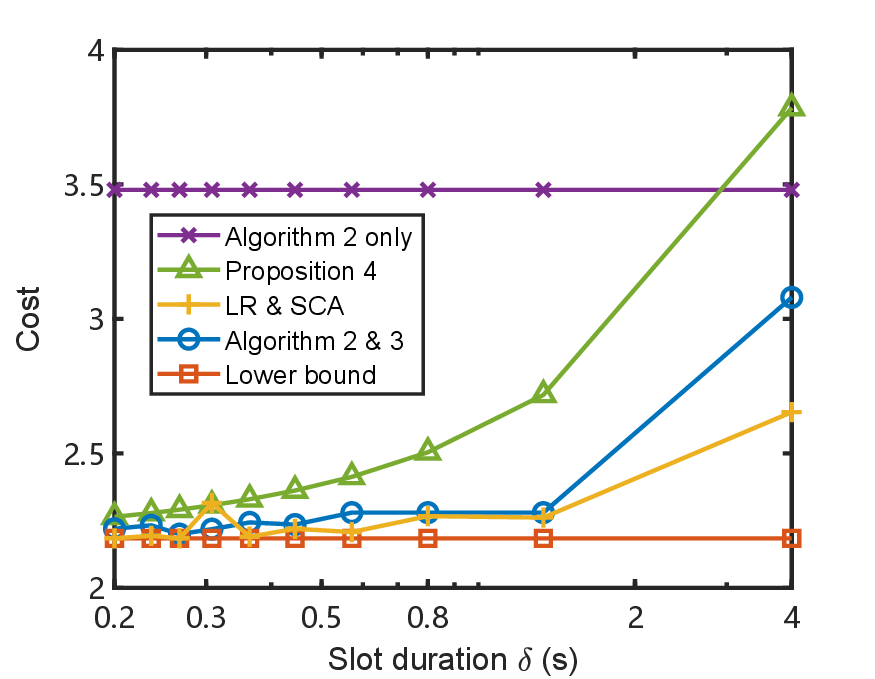}
\par\end{centering}
\caption{\label{fig:Cost_T}Costs over different slot durations $\delta$.}
\end{figure}

While \ac{lr} \& \ac{sca} and proposed schemes have similar performance,
the computational complexity is significantly different. Table. \ref{tab:Computation-time}
shows the running time over different slot lengths. The comparison
includes the \ac{lr} \& \ac{sca} and CVX \& Algorithm \ref{alg:Rounding-Algorithm}
schemes, with the latter addressing the proposed relax-then-rounding
problem not with the proposed dual-based algorithm solving problem
$\mathscr{P}3$, but with the CVX tool. It is demonstrated that the
proposed dual-based algorithm achieves a time-saving factor of 50
times less than the CVX scheme and 1000 times less than the \ac{lr}
\& \ac{sca} scheme for the case of $N=4$. These results suggest
that the proposed dual-based algorithm is significantly more efficient
than the general solver, and the proposed relax-then-round scheme
outperforms the general \ac{lr}-based and \ac{sca}-based algorithm
in terms of efficiency. In addition, for the case of $N=2$, the proposed
dual-based algorithm achieves a time-saving factor of 1000 times less
than the CVX scheme and 30,000 times less than the \ac{lr} \& \ac{sca}
scheme. This is because the parameter $\boldsymbol{v}$ in Algorithm
\ref{alg:alternative_opt} is converges faster for smaller $N$, such
that the running time is dominated by step 1 with computational complexity
$\mathcal{O}(NT)$.
\begin{center}
\begin{table}
\caption{\label{tab:Computation-time}Computational time (s) over different
slot number $T$ and receiving node $N$.}

\centering{}%
\begin{tabular}{c|ccccc}
\hline 
$T$ ($N=4$) & 10 & 50 & 90 & 130 & 170\tabularnewline
\hline 
\ac{lr} \& \ac{sca} & 106.4 & 357.9 & 587.5 & 1165 & 1222\tabularnewline
CVX \& Algorithm \ref{alg:Rounding-Algorithm} & 3.854 & 13.510 & 23.996 & 29.212 & 38.603\tabularnewline
Algorithm \ref{alg:alternative_opt} \& \ref{alg:Rounding-Algorithm} & 0.128 & 0.273 & 0.413 & 0.513 & 0.652\tabularnewline
\hline 
\hline 
$T$ ($N=2$) & 10 & 50 & 90 & 130 & 170\tabularnewline
\hline 
\ac{lr} \& \ac{sca} & 57.78 & 114.7 & 121.3 & 232.9 & 306.5\tabularnewline
CVX \& Algorithm \ref{alg:Rounding-Algorithm} & 2.306 & 6.099 & 10.26 & 14.18 & 18.32\tabularnewline
Algorithm \ref{alg:alternative_opt} \& \ref{alg:Rounding-Algorithm} & 0.002 & 0.003 & 0.004 & 0.008 & 0.009\tabularnewline
\hline 
\end{tabular}
\end{table}
\par\end{center}

\section{Conclusion\label{sec:Conclusion}}

This paper exploits radio maps and develops a relax-then-round scheme
for interference-aware predictive \ac{uav} communications. First,
a radio-map-assisted optimization problem is formulated, addressing
the uncertainty of future \ac{csi} and converting the interference
limit to an equivalent power limit by exploiting radio maps. Then,
the non-convex problem is relaxed and transformed into a convex one,
and a dual-based algorithm with complexity $\mathcal{O}(NT)$ per
iteration is proposed. Subsequently, a cost-aware rounding algorithm
with complexity $\mathcal{O}(NT\log(T))$ is developed, which is theoretically
proven to guarantee asymptotic optimality. Simulations validate that
the radio-map-assisted scheme effectively prevent all unintended receivers
with known positions from experiencing interference, and substantially
reduces the interference to the users at unknown locations. Furthermore,
the numerical results confirm that the proposed relax-then-round scheme
converges to the optimum as the slot duration diminishes, and the
running time of the proposed scheme achieves orders of magnitude of
reduction.

\appendices

\section{Proof of Lemma \ref{lem:Lower_bound_of_thp_exp}\label{sec:Proof_lem_lower_bound_of_thp_exp}}

Since $h_{n}$ follows Gamma distribution $\text{G}(\kappa_{n},g_{n}/\kappa_{n})$,
then $p_{n}h_{n}$ follows Gamma distribution $\text{G}(\kappa_{n},p_{n}g_{n}/\kappa_{n})$.
Based on logarithmic Jensen\textquoteright s gap theory \cite[Proposition 3]{SebKarDig:J18},
we have 
\[
\log\left(1+\mathbb{E}\left\{ p_{n}h_{n}\right\} \right)-\mathbb{E}\left\{ \log\left(1+p_{n}h_{n}\right)\right\} \le\epsilon_{n}
\]
where $\epsilon_{n}=\text{log}(e)/\kappa_{n}-\log(1+1/(2\kappa_{n})).$
Thus we have 
\begin{align*}
\mathbb{E}\left\{ \log\left(1+p_{n}h_{n}\right)\right\}  & \ge\log\left(1+\mathbb{E}\left\{ p_{n}h_{n}\right\} \right)-\epsilon_{n}\\
 & =\log\left(1+p_{n}g_{n}\right)-\epsilon_{n}.
\end{align*}

\section{Proof of Lemma \ref{lem:unique_root}\label{Proof_lem_unique_root}}

At first, the function $\vartheta(x)$ is monotonically decreasing
in $x>0$, because its first derivative $\frac{\partial\vartheta}{\partial x}=-x(\ln2)^{2}2^{x+\epsilon\left(t\right)}/g\left(t\right)$
is less than $0$ for $x>0$. Then, we have $\vartheta(0)=(2^{\epsilon(t)}-1)/g(t)+\lambda>0$
and 
\begin{align*}
 & \vartheta\left(\max\left\{ 2/\ln2,\log\left(\lambda g\left(t\right)-1\right)-\epsilon\left(t\right)\right\} \right)\\
 & \le\big(2^{x+\epsilon\left(t\right)}-1\big)/g\left(t\right)+\lambda-2\cdot\big(2^{x+\epsilon\left(t\right)}/g\left(t\right)\big)\\
 & =-2^{x+\epsilon\left(t\right)}/g\left(t\right)+\lambda-1/g\left(t\right)\\
 & \le-\left(\lambda g\left(t\right)-1\right)/g\left(t\right)+\lambda-1/g\left(t\right)\\
 & =0.
\end{align*}
As a result, the function $\vartheta(x)$ has a unique root in $x\ge0$,
and the root lies in $(0,\max\{2/\ln2,\log(\lambda g(t)-1)-\epsilon(t)\}]$.

\section{Proof of Proposition \ref{prop:op_single}\label{sec:proof_op_single}}

For any $t\in\mathcal{T}$, any feasible negative $\phi(t)$ will
decrease the throughput and increase the cost, thus there must be
optimal $\phi^{*}(t)\ge0$. In other words, the optimal solution to
problem $\mathscr{P}3$ under $N=1$ is equivalent to the solution
to the following problem
\begin{align*}
\mathscr{P}6:\quad\underset{\boldsymbol{\phi},\boldsymbol{l}}{\text{minimize}} & \quad\sum_{t\in\mathcal{T}}\left(\frac{2^{\frac{\phi\left(t\right)}{l\left(t\right)}+\epsilon\left(t\right)}-1}{g\left(t\right)}+\lambda\right)l\left(t\right)\\
\text{subject to} & \quad\sum_{t\in\mathcal{T}}\phi\left(t\right)\ge S\\
 & \quad\phi\left(t\right)\ensuremath{\in\left[0,\bar{c}\left(t\right)l\left(t\right)\right]},l\left(t\right)\in\left[0,1\right],\forall t.
\end{align*}
Since the objective function in $\mathscr{P}6$ is convex and all
the constraints in $\mathscr{P}6$ are affine, then Slater\textquoteright s
condition holds and strong duality holds \cite{boyd2004convex}. In
other words, the \ac{kkt} conditions provide necessary and sufficient
conditions for optimality \cite{boyd2004convex}.

\subsection{Optimality conditions}

Let $\boldsymbol{\Lambda}=\left[\mu,\boldsymbol{v}_{1},\boldsymbol{v}_{2},\boldsymbol{v}_{3},\boldsymbol{v}_{4}\right]$
be the Lagrangian parameter set, then the Lagrangian function of $\mathscr{P}6$
is given by
\begin{align*}
 & L\left(\boldsymbol{\phi},\boldsymbol{l},\bm{\Lambda}\right)\\
 & =\sum_{t\in\mathcal{T}}\left(\frac{2^{\frac{\phi\left(t\right)}{l\left(t\right)}+\epsilon\left(t\right)}-1}{g\left(t\right)}+\lambda\right)l\left(t\right)+\mu\left(S-\sum_{t\in\mathcal{T}}\phi\left(t\right)\right)\\
 & \quad+\sum_{t\in\mathcal{T}}\left(v_{1}\left(t\right)\left(-\phi\left(t\right)\right)+v_{2}\left(t\right)\left(\phi\left(t\right)-\bar{c}\left(t\right)l\left(t\right)\right)\right)\\
 & \quad+\sum_{t\in\mathcal{T}}\left(v_{3}\left(t\right)\left(-l\left(t\right)\right)+v_{4}\left(t\right)\left(l\left(t\right)-1\right)\right).
\end{align*}
Let $(\boldsymbol{\phi}^{*},\boldsymbol{l}^{*})$ be the optimal solution
to $\mathscr{P}6$ and $\boldsymbol{\Lambda}^{*}$ be the optimal
Lagrange multiplier to its dual problem. Then we can derive from the
\ac{kkt} optimality conditions that $\forall t$, $(\boldsymbol{\phi}^{*},\boldsymbol{l}^{*},\boldsymbol{\Lambda}^{*})$
should satisfy 
\begin{equation}
\frac{\partial L}{\partial\phi\left(t\right)}=0\label{eq:stationary_c_o_1-1}
\end{equation}
\begin{equation}
\frac{\partial L}{\partial l\left(t\right)}=0\label{eq:stationary_c_o_2-1}
\end{equation}
\begin{equation}
-\phi\left(t\right)v_{1}\left(t\right)\ge0,\,v_{1}\left(t\right)\ge0\label{eq:complementary_slackness_c_o_1-1}
\end{equation}
\begin{equation}
\left(\phi\left(t\right)-\bar{c}\left(t\right)l\left(t\right)\right)v_{2}\left(t\right)\ge0,\,v_{2}\left(t\right)\ge0\label{eq:complementary_slackness_c_o_2-1}
\end{equation}
\begin{equation}
-l\left(t\right)v_{3}\left(t\right)\ge0,\,v_{3}\left(t\right)\ge0\label{eq:complementary_slackness_c_o_3-1}
\end{equation}
\begin{equation}
\left(l\left(t\right)-1\right)v_{4}\left(t\right)\ge0,\,v_{4}\left(t\right)\ge0\label{eq:complementary_slackness_c_o_4-1}
\end{equation}
\begin{equation}
\left(S-\sum_{t\in\mathcal{T}}\phi\left(t\right)\right)\mu\ge0,\,\mu\ge0.\label{eq:complementary_slackness_c_o_7-1}
\end{equation}

Re-converting $\phi\left(t\right)$ by $c\left(t\right)$ and $l\left(t\right)$
according to $c\left(t\right)=\phi\left(t\right)/l\left(t\right)$,
based on (\ref{eq:stationary_c_o_1-1})-(\ref{eq:complementary_slackness_c_o_7-1}),
we have that the optimal capacity and frequency allocation $\left(\boldsymbol{c}^{*},\boldsymbol{l}^{*}\right)$
should satisfy $\forall t$, (\ref{eq:complementary_slackness_c_o_3-1},
\ref{eq:complementary_slackness_c_o_4-1}) and 
\begin{equation}
\ln2\cdot2^{c\left(t\right)+\epsilon\left(t\right)}/g\left(t\right)-\mu-v_{1}\left(t\right)+v_{2}\left(t\right)=0\label{eq:stationary_c_1-1}
\end{equation}
\begin{equation}
\vartheta\left(c(t)\right)-v_{2}\left(t\right)\bar{c}\left(t\right)-v_{3}\left(t\right)+v_{4}\left(t\right)=0\label{eq:stationary_c_2-1}
\end{equation}
\begin{equation}
-c\left(t\right)l\left(t\right)v_{1}\left(t\right)\ge0,\,v_{1}\left(t\right)\ge0\label{eq:complementary_slackness_c_1-1}
\end{equation}
\begin{equation}
\left(c\left(t\right)l\left(t\right)-\bar{c}\left(t\right)l\left(t\right)\right)v_{2}\left(t\right)\ge0,\,v_{2}\left(t\right)\ge0\label{eq:complementary_slackness_c_2-1}
\end{equation}
\begin{equation}
\left(S-\sum_{t\in\mathcal{T}}c\left(t\right)l\left(t\right)\right)\mu\ge0,\,\mu\ge0.\label{eq:complementary_slackness_c_7-1}
\end{equation}

\subsection{Optimal allocation strategy}

Based on conditions (\ref{eq:stationary_c_1-1}, \ref{eq:complementary_slackness_c_1-1},
\ref{eq:complementary_slackness_c_2-1}), we can get the relationship
between the optimal capacity $c^{*}(t)$ and the optimal Lagrangian
multiplier $\mu^{*}$ during the non-zero allocated time $l^{*}(t)$.
Note that when $l^{*}(t)$ is $0$, the capacity $c^{*}(t)$ is $0$,
according to the definition of $p(t)$ in (\ref{eq:p_over_phi_l}).

If $c^{*}(t)=0$, then $v_{1}\left(t\right)\ge0$, $v_{2}\left(t\right)=0$
according to (\ref{eq:complementary_slackness_c_1-1}) and (\ref{eq:complementary_slackness_c_2-1}).
Then, based on condition (\ref{eq:stationary_c_1-1}), we have $\mu^{*}\le\ln2\cdot2^{c^{*}(t)+\epsilon(t)}/g(t)=\ln2\cdot2^{\epsilon\left(t\right)}/g\left(t\right)$.
If $c^{*}(t)\in(0,\bar{c}(t))$, then $v_{1}\left(t\right)=0$, $v_{2}\left(t\right)=0$
according to (\ref{eq:complementary_slackness_c_1-1}) and (\ref{eq:complementary_slackness_c_2-1}).
Then, based on condition (\ref{eq:stationary_c_1-1}), we have $\mu^{*}=\ln2\cdot2^{c^{*}\left(t\right)+\epsilon\left(t\right)}/g\left(t\right)$.
If $c^{*}(t)=\bar{c}(t)$, then $v_{1}\left(t\right)=0$, $v_{2}\left(t\right)\ge0$
according to (\ref{eq:complementary_slackness_c_1-1}) and (\ref{eq:complementary_slackness_c_2-1}).
Then, based on condition (\ref{eq:stationary_c_1-1}), we have $\mu^{*}\ge\ln2\cdot2^{c^{*}\left(t\right)+\epsilon\left(t\right)}/g\left(t\right)=\ln2\cdot2^{\bar{c}\left(t\right)+\epsilon\left(t\right)}/g\left(t\right)$.
In summary, the optimal allocated capacity can be expressed as a function
of optimal Lagrangian multiplier $\mu^{*}$, as
\begin{equation}
c^{*}\left(t\right)=\begin{cases}
0 & \mu^{*}\le\underline{\mu}\left(t\right)\\
\log\left(\mu^{*}g\left(t\right)/\ln2\right)-\epsilon\left(t\right) & \underline{\mu}\left(t\right)<\mu^{*}<\overline{\mu}\left(t\right)\\
\bar{c}\left(t\right) & \mu^{*}\ge\overline{\mu}\left(t\right)
\end{cases}\label{eq:opt_c-1}
\end{equation}
where $\underline{\mu}(t)=\ln2\cdot2^{\epsilon(t)}/g(t)$ and $\overline{\mu}(t)=\ln2\cdot2^{\bar{c}(t)+\epsilon(t)}/g(t)$.

Based on conditions (\ref{eq:complementary_slackness_c_o_3-1}, \ref{eq:complementary_slackness_c_o_4-1},
\ref{eq:stationary_c_2-1}, \ref{eq:complementary_slackness_c_2-1}),
we can get the relationship between the optimal time $l^{*}(t)$ and
the optimal capacity $c^{*}(t)$.

When $c^{*}\left(t\right)=0$, there must be $l^{*}\left(t\right)=0$.
Otherwise, $v_{2}\left(t\right)=0$ according to (\ref{eq:complementary_slackness_c_2-1})
and $v_{3}\left(t\right)=0$ according to (\ref{eq:complementary_slackness_c_o_3-1}),
then condition (\ref{eq:stationary_c_2-1}) becomes 
\[
(2^{\epsilon\left(t\right)}-1)/g\left(t\right)+\lambda+v_{4}\left(t\right)=0
\]
thus $(2^{\epsilon(t)}-1)/g(t)+\lambda\le0$, then $\lambda<0$, which
contradicts the setting of $\lambda$ greater than 0.

When $c^{*}\left(t\right)>0$, there are two cases based on whether
the maximum generalized efficiency is achievable or not, where the
generalized efficiency function is defined as $\eta(t)\triangleq c(t)/\varphi(c(t))$,
where $\varphi(c(t))=(2^{c(t)+\epsilon\left(t\right)}-1)/g(t)+\lambda$
represents the overall cost consumption. Since $\varphi(c(t))$ is
convex over $c(t)$, then $\eta(t)$ is a quasi-concave function \ac{wrt}
$c(t)$ \cite{boyd2004convex}, and the maximal generalized efficiency
occurs where its first derivative $\frac{\partial\eta(t)}{\partial c(t)}$
equals to $0$,
\[
\frac{\left(2^{c\left(t\right)+\epsilon\left(t\right)}-1\right)/g\left(t\right)+\lambda-c\left(t\right)\left(\ln2\cdot2^{c\left(t\right)+\epsilon\left(t\right)}/g\left(t\right)\right)}{\left(\left(2^{c\left(t\right)+\epsilon\left(t\right)}-1\right)/g\left(t\right)+\lambda\right)^{2}}=0
\]
that is, $\vartheta(c(t))/((2^{c\left(t\right)+\epsilon\left(t\right)}-1)/g(t)+\lambda)^{2}=0$
according to the definition of $\vartheta(c(t))$ in (\ref{eq:def_theta_c}).
Since $((2^{c(t)+\epsilon(t)}-1)/g(t)+\lambda)^{2}>0$ due to $c(t)\ge0$,
$\epsilon\left(t\right)\ge0$, and $\lambda>0$, the solution to the
equation $\vartheta(c(t))=0$ with $c(t)\ge0$ is the capacity with
maximal generalized efficiency, denoted as $\hat{c}(t)$. Note that
$\hat{c}(t)$ always exists and $\hat{c}(t)>0$ according to Lemma
\ref{lem:unique_root}. In other words,
\begin{equation}
\vartheta\left(c\left(t\right)\right)\begin{cases}
>0 & c\left(t\right)\in[0,\hat{c}(t))\\
=0 & c\left(t\right)=\hat{c}(t)\\
<0 & c\left(t\right)>\hat{c}(t).
\end{cases}\label{eq:vartheta_rl-1}
\end{equation}

For the case (i) when the maximum generalized efficiency is achievable,
that is, $t\in\mathcal{T}_{1}\triangleq\{t\in\mathcal{T}:\hat{c}(t)\le\bar{c}(t)\}$.
If $c(t)\in[0,\hat{c}(t))$, we have $\vartheta(c(t))>0$ according
to (\ref{eq:vartheta_rl-1}) and $v_{2}(t)=0$ according to (\ref{eq:complementary_slackness_c_2-1}).
Then based on (\ref{eq:stationary_c_2-1}), we have
\[
-v_{3}\left(t\right)+v_{4}\left(t\right)<0
\]
which indicates $v_{3}\left(t\right)\neq0$, thus $l^{*}\left(t\right)=0$
according to (\ref{eq:complementary_slackness_c_o_3-1}).

If $c\left(t\right)=\hat{c}(t)$, we have $\vartheta(c(t))=0$ according
to (\ref{eq:vartheta_rl-1}) and $v_{2}(t)=0$ according to (\ref{eq:complementary_slackness_c_2-1}).
Then based on (\ref{eq:stationary_c_2-1}), we have
\[
-v_{3}\left(t\right)+v_{4}\left(t\right)=0
\]
which indicates $v_{3}(t)=v_{4}(t)\ge0$, thus $l^{*}\left(t\right)\in[0,1]$
according to (\ref{eq:complementary_slackness_c_o_3-1}) and (\ref{eq:complementary_slackness_c_o_4-1}).

If $c\left(t\right)=(\hat{c}(t),\bar{c}(t)]$, we have $\vartheta(c(t))<0$
according to (\ref{eq:vartheta_rl-1}). Then based on (\ref{eq:stationary_c_2-1}),
we have
\[
-v_{2}\left(t\right)\bar{c}\left(t\right)-v_{3}\left(t\right)+v_{4}\left(t\right)>0
\]
which indicates $v_{4}(t)\neq0$, thus $l^{*}(t)=1$ according to
(\ref{eq:complementary_slackness_c_o_4-1}).

In summary, the relationship between between the optimal time $l^{*}(t)$
and the optimal capacity $c^{*}(t)$ is
\begin{equation}
l^{*}\left(t\right)\begin{cases}
=0 & c^{*}\left(t\right)\in[0,\hat{c}\left(t\right))\\
\in[0,1] & c^{*}\left(t\right)=\hat{c}\left(t\right)\\
=1 & c^{*}\left(t\right)\in(\hat{c}\left(t\right),\bar{c}\left(t\right)]
\end{cases}\label{eq:opt_l_case1-1}
\end{equation}
for $t\in\mathcal{T}_{1}\triangleq\left\{ t:\hat{c}\left(t\right)\le\bar{c}\left(t\right)\right\} $.

Similarly, for $t\in\mathcal{T}_{2}\triangleq\left\{ t:\hat{c}\left(t\right)>\bar{c}\left(t\right)\right\} $,
we have
\begin{equation}
l^{*}\left(t\right)\begin{cases}
=0 & c^{*}\left(t\right)\in\left[0,\bar{c}\left(t\right)\right)\\
\in[0,1] & c^{*}\left(t\right)=\bar{c}\left(t\right).
\end{cases}\label{eq:opt_l_case2-1}
\end{equation}

Finally, according to condition (\ref{eq:complementary_slackness_c_7-1}),
the optimal solution $(c^{*}(t),l^{*}(t))$ is the solution to the
following problem
\begin{equation}
\sum_{t\in\mathcal{T}}c\left(t\right)l\left(t\right)=S\label{eq:thp_c_eq-1}
\end{equation}
otherwise, $\mu=0$, then $c^{*}\left(t\right)=0$ according to (\ref{eq:opt_c-1})
and the throughput constraint cannot satisfy.

\subsection{Dual problem}

Denote $\tilde{\bm{l}}\triangleq[\tilde{l}(t)]_{t\in\mathcal{T}}$
as the the frequency allocation when $\mu=\hat{\mu}(t)$ for $t\in\mathcal{T}_{1}$
and when $\mu=\tilde{\mu}(t)$ for $t\in\mathcal{T}_{2}$. Combine
the optimal capacity allocation strategy (\ref{eq:complementary_slackness_c_o_4-1})
and the optimal frequency allocation strategy, (\ref{eq:opt_l_case1-1})
and (\ref{eq:opt_l_case2-1}), the optimal $\phi(t)$ and $l(t)$
can be expressed as

(i) for $t\in\mathcal{T}_{1}$
\begin{equation}
\phi\left(t\right)=\begin{cases}
0 & \mu<\hat{\mu}\left(t\right)\\
\hat{c}\left(t\right)\tilde{l}\left(t\right) & \mu=\hat{\mu}\left(t\right)\\
\log\left(\mu g\left(t\right)/\ln2\right)-\epsilon\left(t\right) & \hat{\mu}\left(t\right)<\mu<\bar{\mu}\left(t\right)\\
\bar{c}\left(t\right) & \mu\ge\bar{\mu}\left(t\right)
\end{cases}\label{eq:ct_1_A-1}
\end{equation}
\begin{equation}
l\left(t\right)=\mathbb{I}\left\{ \mu>\hat{\mu}\left(t\right)\right\} +\tilde{l}\left(t\right)\mathbb{I}\left\{ \mu=\hat{\mu}\left(t\right)\right\} \label{eq:lt_1_A-1}
\end{equation}

(ii) for $t\in\mathcal{T}_{2}$
\begin{equation}
\phi\left(t\right)=\begin{cases}
0 & \mu<\tilde{\mu}\left(t\right)\\
\bar{c}\left(t\right)\tilde{l}\left(t\right) & \mu=\tilde{\mu}\left(t\right)\\
\bar{c}\left(t\right) & \mu>\tilde{\mu}\left(t\right)
\end{cases}\label{eq:ct_2_A-1}
\end{equation}
\begin{equation}
l\left(t\right)=\mathbb{I}\left\{ \mu>\tilde{\mu}\left(t\right)\right\} +\tilde{l}\left(t\right)\mathbb{I}\left\{ \mu=\tilde{\mu}\left(t\right)\right\} \label{eq:lt_2_A-1}
\end{equation}
where (\ref{eq:ct_1_A-1}) holds because $\underline{\mu}(t)<\hat{\mu}(t)\triangleq\ln2\cdot2^{\hat{c}(t)+\epsilon(t)}/g(t)$
due to $\hat{c}(t)>0$, and (\ref{eq:ct_2_A-1}) holds because $\tilde{\mu}(t)=(\bar{p}(t)+\lambda)/(\log(1+\bar{p}(t)g(t))-\epsilon(t))$
is the marginal rate of throughput cost when the slot is active at
time $t$ for $t\in\mathcal{T}_{2}$. Note that for $t\in\mathcal{T}_{1}$,
the marginal rate of throughput cost, denoted as $\Delta\tilde{F}(t)/\Delta\phi(t)$,
is always $\mu$, according to Lemma \ref{lem:marginal_rate-1}.

Finally, combining the condition (\ref{eq:thp_c_eq-1}), the parameter
$(\mu,\tilde{\bm{l}})$ are chosen from 
\[
\tilde{\Upsilon}\left(\mu,\tilde{\bm{l}}\right)\triangleq\sum_{t\in\mathcal{T}}\phi(t)=S.
\]

\begin{lem}
\label{lem:marginal_rate-1}For $t\in\mathcal{T}_{1}$, the marginal
rate of throughput cost is always $\mu$.
\end{lem}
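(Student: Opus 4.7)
My plan is to interpret the ``marginal rate of throughput cost'' as $d\tilde{F}(t)/d\phi(t)$, the ratio of an infinitesimal increment in the per-slot cost to an infinitesimal increment in $\phi(t)$, with $l(t)$ simultaneously re-optimized according to Proposition \ref{prop:op_single}(i). For $t\in\mathcal{T}_{1}$ the optimum lies in one of four regimes indexed by where $\mu$ sits relative to $\hat{\mu}(t)$ and $\bar{\mu}(t)$. In the outer regimes $\mu<\hat{\mu}(t)$ and $\mu\ge\bar{\mu}(t)$ the allocation is frozen (at $\phi(t)=0$ and at $\phi(t)=\bar{c}(t)$ respectively), so the claim has no content there; only the two active regimes require work.

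For the first active regime $\mu=\hat{\mu}(t)$, the free parameter is $\tilde{l}(t)\in[0,1]$, giving $\phi(t)=\hat{c}(t)\tilde{l}(t)$ and per-slot cost
\[
F(t)=\Bigl(\bigl(2^{\hat{c}(t)+\epsilon(t)}-1\bigr)/g(t)+\lambda\Bigr)\tilde{l}(t).
\]
Both quantities are linear in $\tilde{l}(t)$, so the ratio reduces to $\Delta F(t)/\Delta\phi(t)=((2^{\hat{c}(t)+\epsilon(t)}-1)/g(t)+\lambda)/\hat{c}(t)$. I would then invoke $\vartheta(\hat{c}(t))=0$ from Lemma \ref{lem:unique_root}, which rearranges to $(2^{\hat{c}(t)+\epsilon(t)}-1)/g(t)+\lambda=\hat{c}(t)\cdot\ln 2\cdot 2^{\hat{c}(t)+\epsilon(t)}/g(t)=\hat{c}(t)\hat{\mu}(t)$, so the ratio collapses to $\hat{\mu}(t)=\mu$.

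For the second active regime $\hat{\mu}(t)<\mu<\bar{\mu}(t)$, we have $l(t)=1$ fixed and $\phi(t)=\log(\mu g(t)/\ln 2)-\epsilon(t)\in(\hat{c}(t),\bar{c}(t))$. With $l(t)$ pinned at $1$ the cost reduces to a smooth function $F(t)=(2^{\phi(t)+\epsilon(t)}-1)/g(t)+\lambda$ of $\phi(t)$ alone, whose derivative is $\ln 2\cdot 2^{\phi(t)+\epsilon(t)}/g(t)$; substituting $2^{\phi(t)+\epsilon(t)}=\mu g(t)/\ln 2$ immediately gives $\mu$.

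The main obstacle is handling the non-smooth transition at $\mu=\hat{\mu}(t)$, where the mechanism of increasing throughput switches from scaling $\tilde{l}(t)$ at fixed capacity $\hat{c}(t)$ to scaling $\phi(t)$ at full $l(t)=1$. I would resolve this by checking that the two one-sided marginal rates derived above agree at the junction $\mu=\hat{\mu}(t)$ (both equal $\hat{\mu}(t)$), and similarly match $\bar{\mu}(t)$ at the upper endpoint by continuity of $\ln 2\cdot 2^{\phi(t)+\epsilon(t)}/g(t)$ evaluated at $\phi(t)=\bar{c}(t)$. Hence the marginal rate is a single well-defined quantity equal to $\mu$ across the active range, which is the claim.
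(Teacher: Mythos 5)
Your proposal is correct and follows essentially the same route as the paper's proof: in the time-scaling regime $\mu=\hat{\mu}(t)$ you use the linearity in $\tilde{l}(t)$ together with $\vartheta(\hat{c}(t))=0$ to collapse the ratio to $\hat{\mu}(t)=\mu$, and in the regime $\hat{\mu}(t)<\mu<\bar{\mu}(t)$ you differentiate the per-slot cost at $l(t)=1$, exactly matching the paper's two cases (the paper merely parameterizes the second case by $\mu$ and takes a ratio of derivatives, which is equivalent to your direct $d\tilde{F}(t)/d\phi(t)$ computation). The extra check of one-sided limits at the junctions is harmless but not needed.
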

\begin{proof}
There are two kinds of throughput increasing for $t\in\mathcal{T}_{1}$,
i) increasing the allocated time $\Delta t$ when $\mu=\hat{\mu}\left(t\right)$;
ii) increasing the allocated power $\Delta p$ when $\hat{\mu}\left(t\right)>\mu>\bar{\mu}\left(t\right)$.
For case i), the marginal rate of throughput cost is 
\begin{equation}
\frac{\Delta\tilde{F}(t)}{\Delta\phi(t)}=\frac{\left(\hat{p}\left(t\right)+\lambda\right)\cdot\Delta t}{\hat{c}\left(t\right)\cdot\Delta t}=\frac{(2^{\hat{c}(t)+\epsilon\left(t\right)}-1)/g(t)+\lambda}{\hat{c}\left(t\right)}.\label{eq:c_per_thp_v1-1}
\end{equation}
Since $\vartheta(\hat{c}(t))=0$ according to (\ref{eq:vartheta_rl-1}),
we have $(2^{\hat{c}(t)+\epsilon\left(t\right)}-1)/g(t)+\lambda=\hat{c}(t)\cdot(\ln2\cdot2^{\hat{c}(t)+\epsilon(t)}/g(t))$.
Thus (\ref{eq:c_per_thp_v1-1}) becomes
\[
\frac{\Delta\tilde{F}(t)}{\Delta\phi(t)}=\frac{\ln2\cdot2^{\hat{c}(t)+\epsilon(t)}}{g(t)}\stackrel{(a)}{=}\hat{\mu}\left(t\right)=\mu
\]
where (a) holds because $\hat{c}(t)=\log(\hat{\mu}(t)g(t)/\ln2)-\epsilon(t)$
according to (\ref{eq:opt_c-1}).

For case ii), the marginal rate of throughput cost is 
\begin{equation}
\frac{\Delta\tilde{F}(t)}{\Delta\phi(t)}=\frac{\nabla_{\mu}p\left(t\right)\cdot\Delta\mu}{\nabla_{\mu}c\left(t\right)\cdot\Delta\mu}\label{eq:c_per_thp_v2-1}
\end{equation}
where $c(t)=\log\left(\mu\cdot g(t)/\ln2\right)-\epsilon(t)$ and
$p(t)=(2^{c(t)+\epsilon\left(t\right)}-1)/g(t)=(\mu g(t)/\ln2-1)/g(t)$
according to (\ref{eq:opt_c-1}). Thus (\ref{eq:c_per_thp_v2-1})
becomes 
\[
\frac{\Delta\tilde{F}(t)}{\Delta\phi(t)}=\frac{1\cdot\Delta\mu}{\frac{1}{\mu}\cdot\Delta\mu}=\mu.
\]
\end{proof}

\section{Proof of Proposition \ref{prop:monotonicity}\label{sec:proof_prop_monotonicity}}

Function $\tilde{\Upsilon}(\mu,\tilde{\bm{l}})$ is a summation of
non-negative function $\phi(t)$ over $t$ according to its definition
in (\ref{eq:dual-obj}). Thus, function $\tilde{\Upsilon}(\mu,\tilde{\bm{l}})$
has monotonicity property if function $\phi(t;\mu,l)$ over $(\mu,l)$
for all $t\in\mathcal{T}$ has monotonicity property. Accordingly,
we will prove that for all $t\in\mathcal{T}$, any $\mu_{1}>\mu_{2}\ge0$,
and $1\ge\tilde{l}_{1},\tilde{l}_{2}\ge0$, it holds that $\phi(t;\mu_{1},\tilde{l}_{1})\ge\phi(t;\mu_{2},\tilde{l}_{2})$;
then, prove that for all $t\in\mathcal{T}$, any $1\ge\tilde{l}_{1}>\tilde{l}_{2}\ge0$
and $\mu\ge0$, it holds that $\phi(t;\mu,\tilde{l}_{1})\ge\phi(t;\mu,\tilde{l}_{2})$.

When $\mu_{1}>\mu_{2}\ge\bar{\mu}(t)$, $\phi(t;\mu_{1},\tilde{l}_{1})=\phi(t;\mu_{2},\tilde{l}_{2}),\,\forall\tilde{l}_{1},\tilde{l}_{2}\in[0,1]$.
When $\bar{\mu}(t)>\mu_{1}>\mu_{2}>\hat{\mu}(t)$, the throughput
$\log(\mu g(t)/\ln2)-\epsilon(t)$ is positive and monotonically increasing
over $\mu$ due to its first derivative $(\ln2\cdot\mu)^{-1}>0$,
thus $\phi(t;\mu_{1},\tilde{l}_{1})>\phi(t;\mu_{2},\tilde{l}_{2}),\,\forall\tilde{l}_{1},\tilde{l}_{2}\in[0,1]$.
When $\hat{\mu}(t)\ge\mu_{1}>\mu_{2}\ge0$, $\lim_{\mu\to\hat{\mu}(t)^{-}}\phi(t)=0\le\phi(t;\hat{\mu}(t),\tilde{l})\le\log(\hat{\mu}(t)g(t)/\ln2)-\epsilon(t)=\lim_{\mu\to\hat{\mu}(t)^{+}}\phi(t)$
due to $\tilde{l}\in[0,1]$, thus $\phi(t;\mu_{1},\tilde{l}_{1})\ge\phi(t;\mu_{2},\tilde{l}_{2}),\,\forall\tilde{l}_{1},\tilde{l}_{2}\in[0,1]$.\footnote{Here, $a^{-}$ represents the left-sided limit and $a^{+}$ represents
the right-sided limit.} In summary, for all $t\in\mathcal{T}$, any $\mu_{1}>\mu_{2}\ge0$,
and $1\ge\tilde{l}_{1},\tilde{l}_{2}\ge0$, it holds that $\phi(t;\mu_{1},\tilde{l}_{1})\ge\phi(t;\mu_{2},\tilde{l}_{2})$.

If $\mu\neq\hat{\mu}(t)$, $\tilde{l}$ has no effect on $\phi$,
then we have $\phi(t;\mu,\tilde{l}_{1})=\phi(t;\mu,\tilde{l}_{2})$
for any $1\ge\tilde{l}_{1}>\tilde{l}_{2}\ge0$. If $\mu\neq\hat{\mu}(t)$,
$\phi$ is linear function over $\tilde{l}$, thus we have $\phi(t;\mu,\tilde{l}_{1})>\phi(t;\mu,\tilde{l}_{2})$
for any $1\ge\tilde{l}_{1}>\tilde{l}_{2}\ge0$. In summary, for all
$t\in\mathcal{T}$, any $1\ge\tilde{l}_{1}>\tilde{l}_{2}\ge0$ and
$\mu\ge0$, it holds that $\phi(t;\mu,\tilde{l}_{1})\ge\phi(t;\mu,\tilde{l}_{2})$.

\section{Large Gap Case\label{sec:proof-of-gap-23}}

Consider a one-receiver zero-neighbor scenario with constant and deterministic
channel, that is, $N=1$, $M=0$, $g(t)=g_{\text{c}}$, and $\kappa(t)=\infty$
for all $t\in\mathcal{T}$. One of the optimal solutions to $\mathscr{P}2$
is $l^{\prime}(t)=l_{\text{c}}$ and $p^{\prime}(t)=p_{\text{c}}$
for all $t\in\mathcal{T}$, where $Tl_{c}\log(1+p_{c}g_{c})=S$. Suppose
there is a throughput requirement $S$ such that $\sum_{t\in\mathcal{T}}l^{\prime}(t)\le1$,
that is, $l_{\text{c}}T\le1$, one can construct another transmission
strategy $(\boldsymbol{P}^{\prime\prime},\boldsymbol{L}^{\prime\prime})$,
where $l^{\prime\prime}(1)=l_{\text{c}}T$, $p^{\prime\prime}(1)=p_{\text{c}}$,
and $l^{\prime\prime}(t)=0$, $p^{\prime\prime}(t)=0$ for $t\in\mathcal{T}\backslash\{1\}$.

The following can be observed for strategy $(\boldsymbol{P}^{\prime\prime},\boldsymbol{L}^{\prime\prime})$:
(i) the strategy $(\boldsymbol{P}^{\prime\prime},\boldsymbol{L}^{\prime\prime})$
is also an optimal solution to problem $\mathscr{P}2$ because the
relaxed cost (\ref{eq:erg_relax}) and throughput (\ref{eq:thp_n_def})
under $(\boldsymbol{P}^{\prime\prime},\boldsymbol{L}^{\prime\prime})$
are the same as under $(\boldsymbol{P}^{\prime},\boldsymbol{L}^{\prime})$.
(ii) The difference of the actual cost (\ref{eq:erg_def}) of two
strategies is $\lambda(T-1)$ because 
\begin{align*}
 & F\left(\bm{P}^{\prime},\bm{L}^{\prime}\right)-F\left(\bm{P}^{\prime\prime},\bm{L}^{\prime\prime}\right)\\
 & =\sum_{t\in\mathcal{T}}p^{\prime}\left(t\right)l^{\prime}\left(t\right)+\lambda\sum_{t\in\mathcal{T}}\mathbb{I}\left\{ l^{\prime}\left(t\right)>0\right\} \\
 & \quad-\sum_{t\in\mathcal{T}}p^{\prime\prime}\left(t\right)l^{\prime\prime}\left(t\right)+\lambda\sum_{t\in\mathcal{T}}\mathbb{I}\left\{ l^{\prime\prime}\left(t\right)>0\right\} \\
 & =\lambda T-\lambda.
\end{align*}

In addition, the optimal value of $\mathscr{P}1$, denoted as $F^{*}$,
is less than or equal to the cost in (\ref{eq:erg_def}) by the solution
to $\mathscr{P}2$, {\em i.e.}, $F^{*}\le F(\boldsymbol{P}^{*},\boldsymbol{L}^{*})$,
where $(\boldsymbol{P}^{*},\boldsymbol{L}^{*})$ is an optimal solution
to $\mathscr{P}2$, because $\mathscr{P}1$ and $\mathscr{P}2$ are
two optimization problems with the same feasible set. As a result,
the additional cost in (\ref{eq:erg_def}) by the solution of $\mathscr{P}2$
can be given by
\begin{align*}
F\left(\boldsymbol{P}^{*},\boldsymbol{L}^{*}\right)-F^{*} & \ge F\left(\boldsymbol{P}^{*},\boldsymbol{L}^{*}\right)-F\left(\bm{P}^{\prime\prime},\bm{L}^{\prime\prime}\right)\\
 & \stackrel{(a)}{=}F\left(\bm{P}^{\prime},\bm{L}^{\prime}\right)-F\left(\bm{P}^{\prime\prime},\bm{L}^{\prime\prime}\right)\\
 & =\lambda\left(T-1\right).
\end{align*}
where (a) holds because $(\boldsymbol{P}^{\prime},\boldsymbol{L}^{\prime})$
is one of the optimal solution of $\mathscr{P}2$.

\section{Proof of Proposition \ref{prop:opt_preserve_c}\label{sec:proof-lem-opt_preserve_c}}

Denote the the generalized efficiency on the optimal solution as 
\begin{equation}
\eta_{n}\left(t\right)=c_{n}^{*}\left(t\right)/\left(p_{n}^{*}\left(t\right)+\lambda\right)\label{eq:erg-efficiency-def-1}
\end{equation}
where $c_{n}^{*}(t)=\text{\ensuremath{\log}}(1+p_{n}^{*}(t)g_{n}(t))-\epsilon_{n}(t)$.
\begin{lem}
\label{lem:constant_effc-1}The generalized efficiencies of all partially-used
slots by any receiver $n$ are the same, that is, $\eta_{n}(t_{1})=\eta_{n}(t_{2}),\,\forall t_{1},t_{2}\in\bar{\mathcal{S}}_{n}(\boldsymbol{L}^{*})$.
\end{lem}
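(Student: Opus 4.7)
The plan is to prove the lemma by a local exchange argument: if two partially-used slots of receiver $n$ had different generalized efficiencies, a small reallocation between them would strictly reduce the cost $\tilde F$ of problem $\mathscr{P}2$ while keeping all constraints satisfied, contradicting the optimality of $(\bm P^*, \bm L^*)$.

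Concretely, assume for contradiction that $t_1, t_2 \in \bar{\mathcal{S}}_n(\boldsymbol{L}^*)$ with $\eta_n(t_1) > \eta_n(t_2)$. I would keep all powers $p_n^*(t)$, and hence all per-slot rates $c_n^*(t)$, unchanged, and perturb only the frequency variables of receiver $n$ at these two slots: $l_n(t_1) = l_n^*(t_1) + \Delta_1$ and $l_n(t_2) = l_n^*(t_2) - \Delta_2$, with $\Delta_2 = c_n^*(t_1)\Delta_1/c_n^*(t_2)$ so that the throughput constraint $\sum_t c_n^*(t) l_n(t) \ge S_n$ is preserved. Feasibility of the perturbation for sufficiently small $\Delta_1 > 0$ follows directly from the definition of $\bar{\mathcal{S}}_n(\boldsymbol{L}^*)$: since $t_1$ is partially used we have $\sum_m l_m^*(t_1) < 1$, leaving room to increase $l_n(t_1)$ without violating $\boldsymbol{l}(t_1) \in \mathcal{L}$; and since $l_n^*(t_2) > 0$, we may decrease $l_n(t_2)$. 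The interference-equivalent power bound and the nonnegativity constraints are untouched since we do not modify $p_n^*$.

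The change in the relaxed cost $\tilde F$ is then
\begin{equation*}
\Delta \tilde F = (p_n^*(t_1) + \lambda)\Delta_1 - (p_n^*(t_2) + \lambda)\Delta_2 = \Delta_1 c_n^*(t_1)\left(\tfrac{1}{\eta_n(t_1)} - \tfrac{1}{\eta_n(t_2)}\right),
\end{equation*}
which is strictly negative by the assumption $\eta_n(t_1) > \eta_n(t_2)$. This contradicts the optimality of $(\bm P^*, \bm L^*)$ for $\mathscr{P}2$, so we must have $\eta_n(t_1) = \eta_n(t_2)$ for every pair $t_1, t_2 \in \bar{\mathcal{S}}_n(\boldsymbol{L}^*)$.

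The only subtlety, and the part I would verify carefully, is that the perturbation stays inside the feasible set of $\mathscr{P}2$ for some $\Delta_1 > 0$. This reduces to observing that $\mathcal{L}$ only constrains $\sum_m l_m(t) \le 1$ and $l_m(t) \ge 0$: the first is strict at $t_1$ by partial-use, and the second is strict at $(n,t_2)$ by $t_2 \in \mathcal{T}_n(\boldsymbol{L}^*)$. The throughput constraint is preserved by construction, and the box constraint $\phi_n(t) \in [-\epsilon_n(t) l_n(t), \bar c_n(t) l_n(t)]$ from the transformed formulation is equivalent to $0 \le p_n(t) \le \bar p(t)$, which is untouched since we fixed $p_n^*$. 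Hence the exchange is admissible, and the contradiction argument goes through.
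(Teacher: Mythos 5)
Your proof is correct and follows essentially the same route as the paper's: a throughput-preserving exchange of frequency resource between the two partially-used slots (powers fixed) that strictly lowers the relaxed cost $\tilde F$ whenever $\eta_n(t_1)\neq\eta_n(t_2)$, contradicting optimality of $(\bm{P}^*,\bm{L}^*)$ for $\mathscr{P}2$. Your explicit feasibility check of the perturbation is a nice addition; like the paper, you implicitly use that $c_n^*(t)>0$ on slots with $l_n^*(t)>0$ (which indeed must hold at an optimum, since otherwise dropping the slot would reduce cost without hurting throughput).
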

\begin{proof}
Suppose that there is an optimal solution to $\mathscr{P}2$, $(\bm{P}^{*},\bm{L}^{*})$,
where there are two partially-used slots $t_{1},t_{2}\in\bar{\mathcal{S}}_{n}$
with different generalized efficiencies, \ac{wlog}, $\eta_{n}(t_{1})>\eta_{n}(t_{2})$.
One can construct another frequency allocation strategy $\bm{L}^{\prime}$,
where $l_{n}^{\prime}(t_{1})=l_{n}^{*}(t_{1})+l_{\text{a}}$, $l_{n}^{\prime}(t_{2})=l_{n}^{*}(t_{2})-l_{\text{s}}$,
and $l_{n}^{\prime}(t)=l_{n}^{*}(t)$ for all $t\in\mathcal{T}\backslash\{t_{1},t_{2}\}$,
such that $\Upsilon_{n}(\bm{P}^{*},\bm{L}^{*})=\Upsilon_{n}(\bm{P}^{*},\bm{L}^{\prime})$,
that is
\begin{align*}
 & \eta_{n}\left(t_{1}\right)\left(p_{n}^{*}\left(t_{1}\right)+\lambda\right)l_{n}^{*}\left(t_{1}\right)+\eta_{n}\left(t_{2}\right)\left(p_{n}^{*}\left(t_{2}\right)+\lambda\right)l_{n}^{*}\left(t_{2}\right)\\
 & =\eta_{n}\left(t_{1}\right)\left(p_{n}^{*}\left(t_{1}\right)+\lambda\right)l_{n}^{\prime}\left(t_{1}\right)+\eta_{n}\left(t_{2}\right)\left(p_{n}^{*}\left(t_{2}\right)+\lambda\right)l_{n}^{\prime}\left(t_{2}\right)
\end{align*}
based on the definition of generalized efficiency in (\ref{eq:erg-efficiency-def-1}).
Thus we have $\eta_{n}\left(t_{1}\right)\left(p_{n}^{*}\left(t_{1}\right)+\lambda\right)l_{\text{a}}=\eta_{n}\left(t_{2}\right)\left(p_{n}^{*}\left(t_{2}\right)+\lambda\right)l_{\text{s}}$.

However, the cost consumed by strategy $(\bm{P}^{*},\bm{L}^{\prime})$
is less than by $(\bm{P}^{*},\bm{L}^{*})$. This is due to the less
cost for transmitting to receiver $n$
\begin{align}
 & F_{n}(\bm{P}^{*},\bm{L}^{\prime})\nonumber \\
 & =\left(p_{n}^{*}\left(t_{1}\right)+\lambda\right)l_{n}^{\prime}\left(t_{1}\right)+\left(p_{n}^{*}\left(t_{2}\right)+\lambda\right)l_{n}^{\prime}\left(t_{2}\right)+E_{\text{r}}\nonumber \\
 & =\left(p_{n}^{*}\left(t_{1}\right)+\lambda\right)l_{n}^{*}\left(t_{1}\right)+\left(p_{n}^{*}\left(t_{2}\right)+\lambda\right)l_{n}^{*}\left(t_{2}\right)+E_{\text{r}}\nonumber \\
 & \quad+\left(p_{n}^{*}\left(t_{1}\right)+\lambda\right)l_{\text{a}}-\left(p_{n}^{*}\left(t_{2}\right)+\lambda\right)l_{\text{s}}\nonumber \\
 & =E_{n}(\bm{P}^{*},\bm{L}^{*})+\left(p_{n}^{*}\left(t_{2}\right)+\lambda\right)l_{\text{s}}\left(\eta_{n}\left(t_{2}\right)/\eta_{n}\left(t_{1}\right)-1\right)\nonumber \\
 & \stackrel{(a)}{<}F_{n}(\bm{P}^{*},\bm{L}^{*})\label{eq:erg_dif_1-1}
\end{align}
where $E_{\text{r}}\triangleq\sum_{t\in\mathcal{T}\backslash\{t_{1},t_{2}\}}(p_{n}^{*}(t)+\lambda)l_{n}^{*}(t)$,
and (a) holds because $\eta_{n}(t_{1})>\eta_{n}(t_{2})$. Thus, $(\bm{P}^{*},\bm{L}^{*})$
is a strictly sub-optimal, which contradicts the hypothesis. As a
result, for all $n\in\mathcal{N}$, the generalized efficiencies $\eta_{n}(t)$
of all partially-used slots $t\in\bar{\mathcal{S}}_{n}$ are the same.
\end{proof}
Then, we prove that the strategy $(\boldsymbol{P},\boldsymbol{L})$
is also optimal by showing the cost and throughput are the same for
both strategies $(\boldsymbol{P}^{*},\boldsymbol{L}^{*})$ and $(\boldsymbol{P},\boldsymbol{L})$.

The throughput on any receiver $n$ under $(\bm{P},\bm{L})$ is the
same as under $(\bm{P}^{*},\bm{L}^{*})$ because
\begin{align*}
\Upsilon_{n}\left(\boldsymbol{P}^{*},\boldsymbol{L}^{*}\right) & =\sum_{t\in\bar{\mathcal{S}}_{n}^{*}}\phi_{n}^{*}\left(t\right)+\sum_{t\in\mathcal{T}\backslash\bar{\mathcal{S}}_{n}^{*}}\phi_{n}^{*}\left(t\right)\\
 & \stackrel{(a)}{=}\sum_{t\in\bar{\mathcal{S}}_{n}^{*}}\phi_{n}^{*}\left(t\right)+\sum_{t\in\mathcal{T}\backslash\bar{\mathcal{S}}_{n}^{*}}\phi_{n}\left(t\right)\\
 & \stackrel{(b)}{=}\sum_{t\in\bar{\mathcal{S}}_{n}^{*}}\phi_{n}\left(t\right)+\sum_{t\in\mathcal{T}\backslash\bar{\mathcal{S}}_{n}^{*}}\phi_{n}\left(t\right)\\
 & =\Upsilon_{n}\left(\boldsymbol{P},\boldsymbol{L}\right)
\end{align*}
where $\bar{\mathcal{S}}_{n}^{*}\triangleq\bar{\mathcal{S}}_{n}(\boldsymbol{L}^{*})$,
(a) holds because $p_{n}(t)=p_{n}^{*}(t)$ and $l_{n}(t)=l_{n}^{*}(t)$
for all $t\in\mathcal{T}\backslash\bar{\mathcal{S}}_{n}^{*}$ according
to condition (i) and (ii), and (b) holds because 
\begin{align*}
\sum_{t\in\bar{\mathcal{S}}_{n}^{*}}\phi_{n}^{*}\left(t\right) & =\sum_{t\in\bar{\mathcal{S}}_{n}^{*}}l_{n}^{*}\left(t\right)c_{n}^{*}\left(t\right)\\
 & \stackrel{(c)}{=}\sum_{t\in\bar{\mathcal{S}}_{n}^{*}}l_{n}^{*}\left(t\right)\eta_{n}\left(t\right)\left(p_{n}^{*}\left(t\right)+\lambda\right)\\
 & \stackrel{(d)}{=}\bar{\eta}_{n}\sum_{t\in\bar{\mathcal{S}}_{n}^{*}}l_{n}^{*}\left(t\right)\left(p_{n}^{*}\left(t\right)+\lambda\right)\\
 & \stackrel{(e)}{=}\bar{\eta}_{n}\sum_{t\in\bar{\mathcal{S}}_{n}^{*}}l_{n}\left(t\right)\left(p_{n}\left(t\right)+\lambda\right)\\
 & \stackrel{(f)}{=}\sum_{t\in\bar{\mathcal{S}}_{n}^{*}}\eta_{n}\left(t\right)l_{n}\left(t\right)\left(p_{n}\left(t\right)+\lambda\right)\\
 & \stackrel{(g)}{=}\sum_{t\in\bar{\mathcal{S}}_{n}^{*}}l_{n}\left(t\right)c_{n}\left(t\right)=\sum_{t\in\bar{\mathcal{S}}_{n}^{*}}\phi_{n}\left(t\right)
\end{align*}
where (c) and (g) hold according to (\ref{eq:erg-efficiency-def-1}),
(d) and (f) holds according to Lemma \ref{lem:constant_effc-1}, and
(e) holds due to condition (iii).

The relaxed cost in (\ref{eq:erg_relax}) under $(\bm{P},\bm{L})$
is the same as under $(\bm{P}^{*},\bm{L}^{*})$ because 
\begin{align*}
 & \tilde{F}\left(\boldsymbol{P}^{*},\boldsymbol{L}^{*}\right)\\
 & =\sum_{n=1}^{N}\left(\sum_{t\in\bar{\mathcal{S}}_{n}^{*}}\left(p_{n}^{*}\left(t\right)+\lambda\right)l_{n}^{*}\left(t\right)+\sum_{t=\mathcal{T}\backslash\bar{\mathcal{S}}_{n}^{*}}\left(p_{n}^{*}\left(t\right)+\lambda\right)l_{n}^{*}\left(t\right)\right)\\
 & \stackrel{(a)}{=}\sum_{n=1}^{N}\left(\sum_{t\in\bar{\mathcal{S}}_{n}^{*}}\left(p_{n}^{*}\left(t\right)+\lambda\right)l_{n}^{*}\left(t\right)+\sum_{t=\mathcal{T}\backslash\bar{\mathcal{S}}_{n}^{*}}\left(p_{n}\left(t\right)+\lambda\right)l_{n}\left(t\right)\right)\\
 & \stackrel{(b)}{=}\sum_{n=1}^{N}\left(\sum_{t\in\bar{\mathcal{S}}_{n}^{*}}\left(p_{n}\left(t\right)+\lambda\right)l_{n}\left(t\right)+\sum_{t=\mathcal{T}\backslash\bar{\mathcal{S}}_{n}^{*}}\left(p_{n}\left(t\right)+\lambda\right)l_{n}\left(t\right)\right)\\
 & =\tilde{F}\left(\boldsymbol{P}^{\prime},\boldsymbol{L}^{\prime}\right)
\end{align*}
where (a) holds because $p_{n}(t)=p_{n}^{*}(t)$ and $l_{n}(t)=l_{n}^{*}(t)$
for all $t\in\mathcal{T}\backslash\bar{\mathcal{S}}_{n}^{*}$ according
to condition (i) and (ii), and (b) holds according to condition (iii).

As a result, the strategy $(\boldsymbol{P},\boldsymbol{L})$ is also
optimal if the three conditions in Proposition \ref{prop:opt_preserve_c}
hold.

\section{Proof of Proposition \ref{prop:perf_cap_non_r}\label{sec:proof-prop-perf-cap-non-r}}

The performance gap is bounded by 
\begin{align*}
 & F\left(\bm{P}^{*},\bm{L}^{*}\right)-F^{*}\\
 & \stackrel{(a)}{\le}F\left(\bm{P}^{*},\bm{L}^{*}\right)-\tilde{F}\left(\bm{P}^{*},\bm{L}^{*}\right)\\
 & =\lambda\sum_{t\in\bar{\mathcal{S}}\left(\bm{L}^{*}\right)}\left(\mathbb{I}\left\{ \sum_{n\in\mathcal{N}}l_{n}^{*}\left(t\right)>0\right\} -\sum_{n\in\mathcal{N}}l_{n}^{*}\left(t\right)\right)\\
 & \quad+\lambda\left(\sum_{t\in\mathcal{T}\backslash\bar{\mathcal{S}}\left(\bm{L}^{*}\right)}\mathbb{I}\left\{ \sum_{n\in\mathcal{N}}l_{n}^{*}\left(t\right)>0\right\} -\sum_{n\in\mathcal{N}}l_{n}^{*}\left(t\right)\right)\\
 & \stackrel{(b)}{=}\lambda\sum_{t\in\bar{\mathcal{S}}\left(\bm{L}^{*}\right)}\left(1-\sum_{n\in\mathcal{N}}l_{n}^{*}\left(t\right)\right)\\
 & \le\lambda\sum_{t\in\bar{\mathcal{S}}\left(\bm{L}^{*}\right)}\left(1\right)=\left|\bar{\mathcal{S}}\left(\bm{L}^{*}\right)\right|\lambda
\end{align*}
where (a) holds because the optimal value of problem $\mathscr{P}2$
is a lower bound on the optimal value of $\mathscr{P}1$, and (b)
holds because when $t\in\bar{\mathcal{S}}(\bm{L}^{*})$, then $\sum_{n\in\mathcal{N}}l_{n}^{*}(t)\in(0,1)$,
thus $\mathbb{I}\{\sum_{n\in\mathcal{N}}l_{n}^{*}(t)>0\}=1$, while
when $t\in\mathcal{T}\backslash\bar{\mathcal{S}}\left(\bm{L}^{*}\right)$,
then $\mathbb{I}\{\sum_{n\in\mathcal{N}}l_{n}^{*}(t)>0\}=\sum_{n\in\mathcal{N}}l_{n}^{*}(t)$
no matter of $\sum_{n\in\mathcal{N}}l_{n}^{*}(t)=0$ or $1$.

\section{Proof of Proposition \ref{prop:alg-opt}\label{sec:proof_gap_upper_bound}}

Denote the $n$th round of solution of Algorithm \ref{alg:Rounding-Algorithm}
as $\boldsymbol{L}^{(n)}$. We will prove by induction that the transmission
strategy $(\boldsymbol{P}^{*},\boldsymbol{L}^{(n)}),\forall n\in\mathcal{N}$
is optimal to $\mathscr{P}2$. Firstly, $(\boldsymbol{P}^{*},\boldsymbol{L}^{(0)})$
is a solution to $\mathscr{P}2$ because $(\boldsymbol{P}^{*},\boldsymbol{L}^{*})$
is a solution to $\mathscr{P}2$ and $\boldsymbol{L}^{(0)}=\boldsymbol{L}^{*}$.
Secondly, suppose $(\boldsymbol{P}^{*},\boldsymbol{L}^{(n-1)})$ is
an optimum to $\mathscr{P}2$ for all $n\in\{1,\cdots,N-1\}$, then
$(\boldsymbol{P}^{*},\boldsymbol{L}^{(n)})$ is also an optimum to
$\mathscr{P}2$ according to Proposition \ref{prop:opt_preserve_c},
because the constraints in problem $\mathscr{P}5$ ensure the conditions
in Proposition \ref{prop:opt_preserve_c} hold. As a result, the strategy
$(\boldsymbol{P}^{*},\hat{\boldsymbol{L}})=(\boldsymbol{P}^{*},\boldsymbol{L}^{(N)})$
is an optimal solution to $\mathscr{P}2$.

Next, we prove the cardinality of partly used slots in $\hat{\boldsymbol{L}}$
are less than $N$
\begin{align*}
\left|\bar{\mathcal{S}}\left(\hat{\boldsymbol{L}}\right)\right| & =\left|\bar{\mathcal{S}}\left(\boldsymbol{L}^{\left(N\right)}\right)\right|\stackrel{(a)}{\le}\text{\ensuremath{\sum_{n=1}^{N}}}\left|\bar{\mathcal{S}}_{n}\left(\boldsymbol{L}^{\left(N\right)}\right)\right|\\
 & \stackrel{(b)}{\le}\bar{\mathcal{S}}_{N}\left(\boldsymbol{L}^{\left(N\right)}\right)+\text{\ensuremath{\sum_{n=1}^{N-1}}}\left|\bar{\mathcal{S}}_{n}\left(\boldsymbol{L}^{\left(N-1\right)}\right)\right|\\
 & ...\\
 & \stackrel{(c)}{\le}\text{\ensuremath{\sum_{n=1}^{N}}}\left|\bar{\mathcal{S}}_{n}\left(\boldsymbol{L}^{\left(n\right)}\right)\right|\stackrel{(d)}{\le}N
\end{align*}
where (a) holds because 
\begin{align*}
\bar{\mathcal{S}}\left(\boldsymbol{L}^{\left(N\right)}\right) & =\bar{\mathcal{S}}\left(\boldsymbol{L}^{\left(N\right)}\right)\cap\bigcup_{n=1}^{N}\mathcal{T}_{n}\left(\boldsymbol{L}^{\left(N\right)}\right)\\
 & =\bigcup_{n=1}^{N}\left(\bar{\mathcal{S}}\left(\boldsymbol{L}^{\left(N\right)}\right)\cap\mathcal{T}_{n}\left(\boldsymbol{L}^{\left(N\right)}\right)\right)\\
 & =\bigcup_{n=1}^{N}\bar{\mathcal{S}}_{n}\left(\boldsymbol{L}^{\left(N\right)}\right)
\end{align*}
(d) holds because $\bar{\mathcal{S}}_{n}(\boldsymbol{L}^{(n)})\subseteq\{q_{k+1}\}$
according to (\ref{eq:allocation_lnt}), and (b) to (c) hold because
for any $n\in\{2,\cdots,N\}$ 
\begin{align*}
\text{\ensuremath{\sum_{i=1}^{n-1}}}\left|\bar{\mathcal{S}}_{i}\left(\boldsymbol{L}^{\left(n\right)}\right)\right| & =\text{\ensuremath{\sum_{i=1}^{n-1}}}\left|\bar{\mathcal{S}}\left(\boldsymbol{L}^{\left(n\right)}\right)\cap\mathcal{T}_{i}\left(\boldsymbol{L}^{\left(n\right)}\right)\right|\\
 & \stackrel{(e)}{=}\text{\ensuremath{\sum_{i=1}^{n-1}}}\left|\bar{\mathcal{S}}\left(\boldsymbol{L}^{\left(n\right)}\right)\cap\mathcal{T}_{i}\left(\boldsymbol{L}^{\left(n-1\right)}\right)\right|\\
 & \stackrel{(f)}{\le}\text{\ensuremath{\sum_{i=1}^{n-1}}}\left|\bar{\mathcal{S}}\left(\boldsymbol{L}^{\left(n-1\right)}\right)\cap\mathcal{T}_{i}\left(\boldsymbol{L}^{\left(n-1\right)}\right)\right|\\
 & =\text{\ensuremath{\sum_{i=1}^{n-1}}}\left|\bar{\mathcal{S}}_{i}\left(\boldsymbol{L}^{\left(n-1\right)}\right)\right|
\end{align*}
where (e) holds because $\mathcal{T}_{i}(\boldsymbol{L}^{(n)})\subseteq\mathcal{T}_{i}(\boldsymbol{L}^{(n-1)})$
due to $\boldsymbol{l}_{i}^{(n)}=\boldsymbol{l}_{i}^{(n-1)},\forall i\neq n$
in the $n$th round, and (f) holds because $\bar{\mathcal{S}}(\boldsymbol{L}^{(n)})\subseteq\bar{\mathcal{S}}(\boldsymbol{L}^{(n-1)})$
since the step 3 in Algorithm \ref{alg:Rounding-Algorithm} only modifies
$\boldsymbol{l}_{n}(t)$ for $t\in\bar{\mathcal{S}}_{n}(\boldsymbol{L}^{(n-1)})\subseteq\bar{\mathcal{S}}(\boldsymbol{L}^{(n-1)})$.

Finally, combining the results that $(\boldsymbol{P}^{*},\hat{\boldsymbol{L}})$
is an optimal solution to $\mathscr{P}2$ and $|\bar{\mathcal{S}}(\hat{\boldsymbol{L}})|\le N$,
we can deduce that $F(\bm{P}^{*},\hat{\boldsymbol{L}})-F^{*}\le N\delta\lambda$
according to Proposition \ref{prop:perf_cap_non_r}.

\bibliographystyle{IEEEtran}
\bibliography{IEEEabrv,StringDefinitions,JCgroup,ChenBibCV,JCgroup-bw}

\end{document}